\newtheorem{theorem}{Theorem}[section]
\newtheorem{lemma}[theorem]{Lemma}
\newtheorem{corollary}[theorem]{Corollary}
\newtheorem{proposition}[theorem]{Proposition}
\theoremstyle{definition}
\newtheorem{definition}[theorem]{Definition}
\newtheorem{example}[theorem]{Example}
\theoremstyle{remark}
\newtheorem{remark}[theorem]{Remark}
\renewcommand{\a}{\alpha}
\renewcommand{\b}{\beta}
\newcommand{\s}{\sigma}
\newcommand{\M}{{\cal M}}
\newcommand{\ty}{\infty}
\newcommand{\f}{\varphi}
\newcommand{\ov}[1]{\overline{#1}}
\renewcommand{\O}{\Omega}
\newcommand{\pa}{\partial}
\newcommand{\eR}{\mathbb{R}}
\newcommand{\Ze}{\mathbb{Z}}
\newcommand{\Ce}{\mathbb{C}}
\newcommand{\re}{\mathop{\mathrm{Re}}}
\newcommand{\po}{{\mathop{\mathcal P}}}
\newcommand{\res}{\operatorname{res}}
\newcommand{\D}{\mathrm{d}}
\newcommand{\I}{\mathbbm{i}}
\newcommand{\ovb}[1]{\mkern 1.5mu\overline{\mkern-1.5mu#1\mkern-1.5mu}\mkern 1.5mu}
\newcommand{\unb}[1]{\mkern 1.5mu\underline{\mkern-1.5mu#1\mkern-1.5mu}\mkern 1.5mu}
\newcommand{\di}{\,\mathrm{d}}
\newcommand{\cal}{\mathcal}
\begin{document}

%%%%% title and author(s):
%\markboth{Goran Radunovi\'c}{Fractality and Lapidus zeta functions at infinity}
% \title{Title}

%\markboth{D.\,Juki\'c}{How to use the {\sf mc.cls} class file}
\title[Fractality and Lapidus zeta functions at infinity]{Fractality and Lapidus zeta functions at infinity\thanks{The research of Goran Radunovi\'c was supported in part by the Croatian Science Foundation under the project IP-2014-09-2285 and by the Franco-Croatian PHC-COGITO project.}}

% single author:
% \author[AUTHOR]{AUTHOR}
% \address{address of AUTHOR}
% \email{{\tt email address of AUTHOR} (AUTHOR)}

\author[Goran Radunovi\'c]{Goran Radunovi\'c}

% multiple authors:
% Please mark \corrauth after the name of the corresponding author.
% different addresses:
%\author[AUTHOR1 and AUTHOR2]{AUTHOR1\affil{1}\comma\corrauth and AUTHOR2\affil{2}}
%\address{\affilnum{1}\ address of AUTHOR1\\   \affilnum{2}\ address of AUTHOR2}
%
%same address:
%\author[AUTHOR1, AUTHOR2 and AUTHOR3]{AUTHOR1, AUTHOR2\corrauth and AUTHOR3}
%\address{address of AUTHOR1, AUTHOR2 and AUTHOR3}
%
%\emails{{\tt email of AUTHOR1} (AUTHOR1), {\tt email of AUTHOR2} (AUTHOR2), {\tt email of AUTHOR3} (AUTHOR3)}
%

%%%%% Begin Abstract %%%%%%%%%%%
\begin{abstract}
We study fractality of unbounded sets of finite Lebesgue measure at infinity by introducing the notions of Minkowski dimension and content at infinity. We also introduce the Lapidus zeta function at infinity, study its properties and demonstrate its use in analysis of fractal properties of unbounded sets at infinity.
\end{abstract}
%%%%% end %%%%%%%%%%%

%%%%% Keywords %%%%%%%%%%%
%\keywords{distance zeta function, relative fractal drum, box dimension, complex dimensions, Minkowski content, generalized Cantor set.}

%%%% AMS subject classifications %%%%
%\ams{11M41, 28A12, 28A75, 28A80, 28B15, 42B20, 44A05, 30D30}

%%%% maketitle %%%%%
\maketitle

%%%% Start %%%%%%

\section{Introduction}

In this paper we are interested in relative fractal drums $(A,\O)$ in which the set $A$ has degenerated to the point at infinity.
In short, a relative fractal drum $(A,\O)$ generalizes the notion of a bounded subset of $\eR^N$ and is defined as an ordered pair of subsets of $\eR^N$ where $A$ is nonempty and $\O$ is of finite $N$-dimensional Lebesgue measure satisfying a mild technical condition.
The Lapidus (or distance) zeta function of $(A,\O)$ is then defined as the Lebesgue integral
\begin{equation}\label{zeta_dist}
\zeta_{A,\O}(s):=\int_{\O}d(x,A)^{s-N}\di x,
\end{equation}
for all $s\in\Ce$ such that $\re s$ is sufficiently large, where $d(x,A)$ denotes the Euclidean distance from $x$ to $A$.
Its main property is that the {\em abscissa of convergence} $D(\zeta_{A,\O})$ of $\zeta_{A,\O}$ coincides with the upper box dimension of $(A,\O)$, i.e., $D(\zeta_{A,\O})=\ov{\dim}_B(A,\O)$.
In other words, the integral \eqref{zeta_dist} converges absolutely and defines a holomorphic function in the open half-plane $\{\re s>\ov{\dim}_B(A,\O)\}$.
For the study of relative fractal drums, their corresponding fractal zeta functions and the general higher-dimensional theory of complex dimensions see \cite{fzf,ra} along with the survey articles \cite{brezish,tabarz}.
This higher-dimensional theory generalizes the well known theory of geometric zeta functions for fractal strings and their complex dimensions developed by Michel L. Lapidus and his collaborators in the last two decades (see \cite{lapidusfrank12} and the relevant references therein).  

In the case when the set $A$ degenerates to the point at infinity, we will denote this new kind of relative fractal drum with $(\infty,\O)$.
In this case it is clear that the fractal properties of such a relative fractal drum will depend only on the set $\O$.
We will extend the notions of Minkowski content and box dimension for such relative fractal drums and define a new class of Lapidus zeta functions associated to them.
Furthermore, it will be shown that this new class of Lapidus zeta functions has analogous properties as in the case of ordinary relative fractal drums and hence, provides an analytic approach to the study of fractality of unbounded sets.

The motivation to study the fractal properties of unbounded sets comes from a variety of sources.
In particular, the notion of "unbounded" or "divergent" oscillations appears in problems in oscillation theory (see, e.g. \cite{Dzu,Karp}), automotive industry (see, e.g., \cite{She}), civil engineering (see, e.g, \cite{Pou}) and mathematical applications in biology (see, e.g., \cite{May}).
Unbounded (divergent) oscillations are oscillations the amplitude of which increases with time.
For instance, the oscillations of an airplane that has positive static stability but negative dynamic stability is an example of divergent oscillations that appears in aerodynamics (see, e.g. \cite{Dol}).

Furthermore, unbounded domains themselves are also interesting in the theory of elliptic partial differential equations.
More precisely, the question of solvability of the Dirichlet problem for quasilinear equations in unbounded domains is addressed in~\cite{Maz1} and~\cite[Section~15.8.1]{mazja}.
Also, unbounded domains can be found in other aspects of the theory of partial differential equations; see, for instance~\cite{An,Hur,La,Rab} and~\cite{VoGoLat}.
%Research dealing with unbounded domains of infinite volume can be found in~\cite{GeWe}, and connected with that is the research dealing with cusp-shaped domains (see, e.g.,~[ExBa1--2]),
%~\cite{ExBa1,ExBa2}),
%which also appear in examples in this thesis.
%Furthermore, the new notion of the $\phi$-shell Minkowski content could possibly have a connection to certain comparison principles for the $p$-Laplacian (see, e.g.,~\cite{Ag,MarMizPin,PolSha} and the relevant references therein). 
%Fractal properties of unbounded domains, studied here, could therefore have a future impact and lead to a new approach to these problems.

\section{Minkowski Content and Box Dimension of Unbo\-unded Sets at Infinity}\label{inf_box_def}

Let $\O$ be a Lebesgue measurable subset of the $N$-dimensional Euclidean space $\eR^N$ of finite Lebesgue measure, i.e., $|\O|<\infty$.
Firstly, we will introduce a new notation for the sake of brevity, namely,
\begin{equation}\label{kratica}
_t\O:=B_t(0)^c\cap\O,
\end{equation}
where $t>0$ and $B_t(0)^c$ denotes the complement of the open ball of radius $t$ centered at $0$.
We introduce the {\em tube function of $\O$ at infinity} by $t\mapsto|B_t(0)^c\cap\O|$ for $t>0$ where $|\cdot|$ denotes the $N$-dimensional Lebesgue measure and we will be interested in the asymptotic properties of this function when $t\to +\ty$.
Furthermore, for any real number $r$ we define the {\em upper $r$-dimensional Minkowski content} of $\O$ {\em at infinity}
\begin{equation}\label{upMinkinf}
{\ovb{\M}}^{r}(\ty,\O):=\limsup_{t\to+\infty}\frac{|{B_t(0)^c\cap\O}|}{t^{N+r}},
\end{equation}
and, analogously, by taking the lower limit in \eqref{upMinkinf} as $t\to +\infty$, we define the {\em lower $r$-dimensional Minkowski content} of $\O$ {\em at infinity} denoted by ${\unb{\M}}^{r}(\ty,\O)$.
It is easy to see that the above definition implies the existence of a unique $D\in\eR$ such that ${\ovb{\M}}^{r}(\ty,\O)=+\ty$ for $r<\ovb{D}$ and ${\ovb{\M}}^{r}(\ty,\O)=0$ for $r>\ovb{D}$ and similarly for the lower Minkowski content.
The value $\ovb{D}$ is called the {\em upper box dimension of $\O$ at infinity} and we denote it with ${{\ovb{\dim}}}_B(\ty,\O)$.
Similarly as in the case of ordinary relative fractal drums, we have
\begin{equation}\label{updiminf}
{{\ovb{\dim}}}_B(\ty,\O):=\sup\{r\in\eR:{\ovb{\M}}^{r}(\ty,\O)=+\ty\}=\inf\{r\in\eR:{\ovb{\M}}^{r}(\ty,\O)=0\}.
\end{equation}
Analogously, by using the lower Minkowski content of $\O$ at infinity, we define the {\em lower box dimension of $\O$ at infinity} and denote it by ${\unb{\dim}}_B(\ty,\O)$ and the analog of \eqref{updiminf} is also valid in this case.
%\begin{equation}\label{lodiminf}
%{\unb{\dim}}_B(\ty,\O):=\sup\{r\in\eR\,:\,{\unb{\M}}^{r}(\ty,\O)=+\ty\}=\inf\{r\in\eR\,:\,{\unb{\M}}^{r}(\ty,\O)=0\}.
%\end{equation}
Of course, if the upper and lower box dimensions coincide, we define the {\em box dimension} of $\O$ {\em at infinity} and denote it with $\dim_B(\ty,\O)$.

In the case when the upper and lower Minkowski content at infinity coincide we define the {\em $r$-dimensional Minkowski content} of $\O$ {\em at infinity} and denote it with $\M^{r}(\ty,\O)$.
%Clearly, in this case we have
%\begin{equation}\label{Minkinf}
%\M^{r}(\ty,\O)=\lim_{t\to+\infty}\frac{|B_t(0)^c\cap\O|}{t^{N+r}}.
%\end{equation}
Furthermore, in the case when 
$
0<{\unb{\M}}^{D}(\ty,\O)\leq{\ovb{\M}}^{D}(\ty,\O)<+\ty.
$
for some $D\in\eR$ (which implies that $D=\dim_B(\ty,\O)$), we say that $\O$ is {\em Minkowski nondegenerate at infinity}.
We say that $\O$ is {\em Minkowski measurable at infinity} if it is Minkowski nondegenerate at infinity and its lower and upper Minkowski content at infinity coincide.
%Finally, we will also say that $\O$ is {\em Minkowski degenerate} if it is not Minkowski nondegenerate at infinity.

\begin{proposition}\label{<=-N}
Let $\O$ be a Lebesgue measurable subset of $\eR^N$ of finite Lebesgue measure.
Then ${\unb{{\dim}}}_B(\ty,\O)\leq{{\ovb{\dim}}}_B(\ty,\O)\leq-N$, i.e., the upper and lower box dimensions of $\O$ at infinity are always negative, that is, less than or equal to $-N$.
\end{proposition}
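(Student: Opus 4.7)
\medskip

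\noindent\textbf{Proof plan.} The plan is to derive the bound directly from the finiteness of $|\Omega|$, since the numerator in the definition of the Minkowski content at infinity is uniformly bounded in $t$.

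First I would dispose of the left inequality ${\unb{\dim}}_B(\ty,\O)\le{\ovb{\dim}}_B(\ty,\O)$: for every fixed $r\in\eR$ and $t>0$, applying $\liminf\le\limsup$ to the single family $t\mapsto|{}_t\O|/t^{N+r}$ yields ${\unb{\M}}^r(\ty,\O)\le{\ovb{\M}}^r(\ty,\O)$, and then the characterizations of both dimensions as infima of the set of $r$ for which the corresponding content vanishes give the desired inequality.

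For the main inequality ${\ovb{\dim}}_B(\ty,\O)\le -N$, I would use the rightmost characterization in \eqref{updiminf}, namely
\[
{\ovb{\dim}}_B(\ty,\O)=\inf\{r\in\eR:{\ovb{\M}}^{r}(\ty,\O)=0\},
\]
and show that every $r>-N$ lies in this set. Fix such an $r$, so that $N+r>0$. The trivial monotonicity bound $|{}_t\O|=|B_t(0)^c\cap\O|\le|\O|$ together with $|\O|<\infty$ gives
\[
0\le\frac{|{}_t\O|}{t^{N+r}}\le\frac{|\O|}{t^{N+r}}\xrightarrow[t\to+\infty]{}0,
\]
so that ${\ovb{\M}}^{r}(\ty,\O)=0$. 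Taking the infimum over all $r>-N$ then yields ${\ovb{\dim}}_B(\ty,\O)\le -N$, completing the proof.

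I do not expect any genuine obstacle here: the whole argument boils down to the observation that the tube function at infinity is bounded by $|\O|$, while the normalizing factor $t^{N+r}$ diverges as soon as $r>-N$. The only small point to be careful about is not confusing the sign convention (the exponent $N+r$ in the denominator, with $r$ potentially negative), which is why one only gets a nontrivial bound once $r$ exceeds $-N$; this is precisely what forces the dimension to be $\le -N$ rather than $\le 0$ as in the classical bounded setting.
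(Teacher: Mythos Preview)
Your proposal is correct and follows essentially the same approach as the paper: for $r>-N$ one shows ${\ovb{\M}}^{r}(\ty,\O)=0$ directly from $|\O|<\infty$, and then reads off the bound from~\eqref{updiminf}. The only cosmetic difference is that the paper invokes $|{}_t\O|\to 0$ as $t\to+\infty$, whereas you use the (even simpler) uniform bound $|{}_t\O|\le|\O|$ and let the denominator $t^{N+r}$ do the work; either observation suffices.
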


\begin{proof}
From the definitions \eqref{upMinkinf} and \eqref{updiminf} and the fact that $|\O|<\infty$ we have that $|B_t(0)^c\cap\O|\to 0$ as $t\to+\ty$ which implies that if $N+r>0$, then ${\ovb{\M}}^{r}(\ty,\O)=0$.
\end{proof}

\begin{remark}\label{=-N}
Intuitively the conclusion of Proposition~\ref{<=-N} is expected, since $\O$ having finite Lebesgue measure implies that it must have a certain flatness property relative to infinity.
(Compare with the notion of flatness introduced in~\cite{fzf}.)
Furthermore, if ${\ovb{\dim}}_B(\ty,\O)=-N$, then it follows from the definition that ${\ovb{\M}}^{-N}(\ty,\O)=0$ and, consequently, $\O$ must be Minkowski degenerate at infinity.
\end{remark}

%\begin{figure}[h]
%\begin{center}
%%\psfrag{N+1}{$N+1$}
%\includegraphics[width=15.8cm]{mink_besk}
%%\psfig{file=hopf_infty_20.eps,width=12cm} %100 percent
%\end{center}
%\caption{The graphs of the functions $r\mapsto\ovb{\mathcal{M}}^{r}(\ty,\O)$ and $r\mapsto\unb{\mathcal{M}}^{r}(\ty,\O)$, assuming that $\O$ is Minkowski
%nondegenerate and nonmeasurable at infinity, that is, $D:=\dim_B(\ty,\O)$ exists and $0<\unb{\mathcal{M}}^{r}(\ty,\O)<\ovb{\mathcal{M}}^{r}(\ty,\O)<\ty$.}
%\label{mink_besk}
%\end{figure}

The next two results about the monotonicity are simple consequences of the definitions involved. 

\begin{lemma}\label{subset_inf}
Let $\O_1\subseteq\O_2\subseteq\eR^N$ be two Lebesgue measurable sets and $|\O_2|<\ty$.
Then for any real number $r$ we have that
$
{\ovb{\M}}^r(\ty,\O_1)\leq{\ovb{\M}}^r(\ty,\O_2)$ and ${\unb{\M}}^r(\ty,\O_1)\leq{\unb{\M}}^r(\ty,\O_2)
$
%and also
%\begin{equation}\label{lo_mink_sub}
%{\unb{\M}}^r(\ty,\O_1)\leq{\unb{\M}}^r(\ty,\O_2)
%\end{equation} 
\end{lemma}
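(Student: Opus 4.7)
The proof is essentially immediate from the definitions, so the ``plan'' is really just to unpack them and invoke two standard monotonicity principles in sequence.

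First, I would start from the set-theoretic observation that $\O_1\subseteq\O_2$ implies $B_t(0)^c\cap\O_1\subseteq B_t(0)^c\cap\O_2$ for every $t>0$; that is, ${}_t\O_1\subseteq{}_t\O_2$ in the shorthand \eqref{kratica}. Monotonicity of Lebesgue measure then yields the pointwise bound $|{}_t\O_1|\leq|{}_t\O_2|$ for all $t>0$.

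Second, I would divide both sides by $t^{N+r}$ — permissible because this quantity is strictly positive for $t>0$ regardless of the sign of $N+r$ — obtaining
\begin{equation*}
\frac{|B_t(0)^c\cap\O_1|}{t^{N+r}} \;\leq\; \frac{|B_t(0)^c\cap\O_2|}{t^{N+r}} \qquad\text{for all } t>0.
\end{equation*}
Applying $\limsup_{t\to+\infty}$ (respectively $\liminf_{t\to+\infty}$) to both sides and invoking the monotonicity of $\limsup$/$\liminf$ under pointwise domination then gives, directly from the definition \eqref{upMinkinf} and its lower analogue, the inequalities ${\ovb{\M}}^r(\ty,\O_1)\leq{\ovb{\M}}^r(\ty,\O_2)$ and ${\unb{\M}}^r(\ty,\O_1)\leq{\unb{\M}}^r(\ty,\O_2)$.

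There is no substantive obstacle to navigate. The only point worth flagging is that the hypothesis $|\O_2|<\infty$ is used only to ensure that the tube function of $\O_2$ at infinity — and hence the Minkowski contents appearing on the right — is well-defined and tends to $0$; it automatically forces $|\O_1|<\infty$ as well, so both sides of each inequality make sense. The hypothesis plays no role in the chain of inequalities itself, which is purely a consequence of monotonicity.
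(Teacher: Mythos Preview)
Your proof is correct and is exactly the argument the paper has in mind: the authors do not write out a proof at all, stating only that the result is a simple consequence of the definitions involved. Your unpacking of that consequence---monotonicity of Lebesgue measure followed by monotonicity of $\limsup$/$\liminf$---is precisely what is intended.
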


%\begin{proof}
%This is a consequence of the fact that for every $t>0$ we have $|_t\O_1|\leq|_t\O_2|$.
%\end{proof}

\begin{corollary}\label{subset_inf_dim}
Let $\O_1\subseteq\O_2\subseteq\eR^N$ be two Lebesgue measurable sets with $|\O_2|<\ty$.
Then,
$
{\ovb{\dim}}_B(\ty,\O_1)\leq{\ovb{\dim}}_B(\ty,\O_2)$ and ${\unb{\dim}}_B(\ty,\O_1)\leq{\unb{\dim}}_B(\ty,\O_2).
$
\end{corollary}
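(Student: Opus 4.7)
The plan is to derive the corollary directly from Lemma~\ref{subset_inf} together with the supremum/infimum characterization of the upper and lower box dimensions at infinity recorded in equation~\eqref{updiminf}. There is essentially no obstacle here: the work was already done in the lemma, and all that remains is a one-line monotonicity argument at the level of sets of real numbers.

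In more detail, first I would observe that by Lemma~\ref{subset_inf}, for every $r\in\eR$ we have
\begin{equation*}
\ovb{\M}^{r}(\ty,\O_1)\leq\ovb{\M}^{r}(\ty,\O_2)\quad\text{and}\quad\unb{\M}^{r}(\ty,\O_1)\leq\unb{\M}^{r}(\ty,\O_2).
\end{equation*}
In particular, if $\ovb{\M}^{r}(\ty,\O_1)=+\ty$ then necessarily $\ovb{\M}^{r}(\ty,\O_2)=+\ty$, so
\begin{equation*}
\{r\in\eR:\ovb{\M}^{r}(\ty,\O_1)=+\ty\}\stq\{r\in\eR:\ovb{\M}^{r}(\ty,\O_2)=+\ty\}.
\end{equation*}

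Taking the supremum of both sides and invoking the characterization \eqref{updiminf} of ${\ovb{\dim}}_B(\ty,\,\cdot\,)$ then yields ${\ovb{\dim}}_B(\ty,\O_1)\leq{\ovb{\dim}}_B(\ty,\O_2)$. The inequality for the lower box dimension follows by repeating the same argument verbatim, using the lower Minkowski content part of Lemma~\ref{subset_inf} and the analog of~\eqref{updiminf} for ${\unb{\dim}}_B(\ty,\,\cdot\,)$ mentioned immediately after that equation. The hypothesis $|\O_2|<\ty$ guarantees that both Minkowski contents at infinity are well defined for $\O_1$ and $\O_2$ (the bound $|\O_1|\leq|\O_2|<\ty$ is used implicitly), which is the only place the finite measure assumption enters the argument.
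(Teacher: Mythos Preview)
Your proposal is correct and matches the paper's approach: the paper gives no explicit proof, stating only that this corollary is a ``simple consequence of the definitions involved,'' and your argument is precisely the routine monotonicity deduction from Lemma~\ref{subset_inf} and~\eqref{updiminf} that the paper has in mind.
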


Let us now take a look at a few examples.

\begin{definition}\label{Omega(a,b)}
Let $\a>0$ and $\b>1$ be fixed and define
$
a_j:=j^{\a}$, $l_j:=j^{-\b}$ and $b_j:=a_j+l_j.
$
We define
\begin{equation}\label{omega_a_b}
\O(\a,\b):=\bigcup_{j=1}^{\ty}I_j\subseteq\eR,
\end{equation}
that is, as a union of countably many intervals $I_j:=(a_j,b_j)$. 
\end{definition}

\begin{proposition}\label{1dimexample}
For the set $\O(\a,\b)$ defined by~\eqref{omega_a_b} we have that
\begin{equation}\label{dim(a,b)}
D:=\dim_B(\ty,\O(\a,\b))=\frac{1-(\a+\b)}{\a}\quad\textrm{ and }\quad\M^{D}(\ty,\O(\a,\b))=\frac{1}{\b-1}.
\end{equation}
\end{proposition}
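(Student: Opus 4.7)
\medskip

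\noindent\textbf{Proof plan.} The approach is to compute the tube function at infinity, $t\mapsto |B_t(0)^c\cap\O(\a,\b)|$, up to a relative error $o(1)$ as $t\to+\ty$, and then read off the box dimension and Minkowski content directly from the definitions \eqref{upMinkinf} and \eqref{updiminf}. Note $|\O(\a,\b)| = \sum_{j\geq 1} j^{-\b} < \infty$ since $\b>1$, so Proposition~\ref{<=-N} applies and in particular $D\leq -1 = -N$, consistent with \eqref{dim(a,b)}.

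First I would fix a large $t>0$ and set $k=k(t):=\min\{j\in\eN : a_j \geq t\} = \lceil t^{1/\a}\rceil$, so that $k(t)-1 < t^{1/\a}\leq k(t)$ and therefore $k(t)/t^{1/\a}\to 1$ as $t\to+\ty$. The intervals $I_j$ with $j\geq k$ lie entirely in $B_t(0)^c$, whereas the intervals with $j\leq k-2$ lie entirely in $B_t(0)$ (for $t$ large enough, using that the gaps $a_{j+1}-b_j = (j+1)^\a - j^\a - j^{-\b}$ eventually exceed the width $l_j=j^{-\b}$). The only possibly partial interval is $I_{k-1}$, contributing at most $l_{k-1} = (k-1)^{-\b}$. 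Hence
\begin{equation*}
\sum_{j\geq k} j^{-\b}\;\leq\;\bigl|B_t(0)^c\cap\O(\a,\b)\bigr|\;\leq\;\sum_{j\geq k} j^{-\b} + (k-1)^{-\b}.
\end{equation*}

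Next I would use the standard integral comparison for $\b>1$, namely
\begin{equation*}
\frac{k^{1-\b}}{\b-1}\;\leq\;\sum_{j=k}^{\ty}j^{-\b}\;\leq\;\frac{k^{1-\b}}{\b-1}+k^{-\b},
\end{equation*}
to conclude that $\sum_{j\geq k}j^{-\b} = \frac{k^{1-\b}}{\b-1}\bigl(1+O(k^{-1})\bigr)$ as $k\to\ty$. Combining with the bound on the partial-interval contribution and substituting $k=k(t)$ (so that $k^{1-\b} = t^{(1-\b)/\a}(1+o(1))$), we obtain the asymptotics
\begin{equation*}
\bigl|B_t(0)^c\cap\O(\a,\b)\bigr| = \frac{1}{\b-1}\,t^{(1-\b)/\a}\bigl(1+o(1)\bigr)\q\text{as }t\to+\ty.
\end{equation*}

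Finally I would divide by $t^{N+r}=t^{1+r}$ and pass to the limit: the quotient tends to a finite nonzero number precisely when $1+r=(1-\b)/\a$, i.e., when $r=D=\frac{1-(\a+\b)}{\a}$, and in that case the limit equals $\frac{1}{\b-1}$. For $r<D$ (resp.\ $r>D$) the quotient tends to $+\ty$ (resp.\ $0$), which by \eqref{updiminf} yields both $\ovb{\dim}_B(\ty,\O(\a,\b)) = \unb{\dim}_B(\ty,\O(\a,\b)) = D$ and the Minkowski measurability with $\M^D(\ty,\O(\a,\b))=\frac{1}{\b-1}$. The only delicate point is verifying that for $t$ large enough the intervals $I_j$ with $j\leq k(t)-2$ really do lie in $B_t(0)$, which requires checking that $b_{k-2}= (k-2)^\a+(k-2)^{-\b}\leq t$; this follows from the mean value estimate $(k-1)^\a-(k-2)^\a\sim \a(k-2)^{\a-1}$, which dominates $(k-2)^{-\b}$ for large $k$ since $\a>0$ and $\b>1$.
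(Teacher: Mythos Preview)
Your proof is correct and follows essentially the same approach as the paper: both estimate the tube function $|B_t(0)^c\cap\O(\a,\b)|$ by summing the lengths of the intervals $I_j$ lying beyond $t$ (indexed by $j\geq\lceil t^{1/\a}\rceil$, respectively $j>\lfloor t^{1/\a}\rfloor$), bound the tail $\sum_{j\geq k}j^{-\b}$ by integral comparison, and absorb the at most one partially-covered interval as a lower-order term. The only cosmetic difference is that you package the estimates as an asymptotic $|B_t(0)^c\cap\O|=\frac{1}{\b-1}t^{(1-\b)/\a}(1+o(1))$, whereas the paper carries explicit two-sided inequalities to the end.
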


\begin{proof}
Firstly, we observe that for $j$ large enough the intervals $I_j$ become disjoint, i.e., $j^{-\b}<(j+1)^{\a}-j^{\a}$.
As we see, $\O(\a,\b)$ is a union of intervals that ``escape'' to infinity and $|\O(\a,\b)|\leq\sum_{j=1}^{\ty}j^{-\b}<\infty$. Let us compute the box dimension and Minkowski content of $\O(\a,\b)$ at infinity. For $t>0$ let $j_0$ be such that for every $j>j_0$ it holds that $a_j>t$, that is, $j_0=\lfloor t^{1/\a}\rfloor$.
Now we fix $t$ large enough so that the intervals $I_{j}$ are disjoint for $j\geq j_0$.
From this, we have
\begin{equation}
|B_t(0)^c\cap{\O(\a,\b)}|=\sum_{j>j_0}j^{-\b}+\chi_{\O}(t)(b_{j_0}-t),
\end{equation}
with $\chi_{\O}$ being the characteristic function of $\O$. This implies the following estimate
\begin{equation}
\sum_{j>j_0}j^{-\b}\leq|B_t(0)^c\cap{\O(\a,\b)}|\leq\sum_{j>j_0}j^{-\b}+j_0^{-\b}.
\end{equation} 
Furthermore, using the integral criterion 
$
\int_{j_0+1}^{+\ty}x^{-\b}\di x\leq\sum_{j>j_0}j^{-\b}\leq(j_0+1)^{-\b}+\int_{j_0+1}^{+\ty}x^{-\b}\di x
$
for estimating the sum, we have
$$
\frac{1}{\b-1}(j_0+1)^{1-\b}\leq|B_t(0)^c\cap{\O(\a,\b)}|\leq\frac{1}{\b-1}(j_0+1)^{1-\b}+(j_0+1)^{-\b}+j_0^{-\b}.
$$
Finally, by using the fact that $t^{1/\a}-1\leq j_0+1\leq t^{1/\a}+1$, we conclude that
$$
\frac{1}{\b-1}(t^{\frac{1}{\a}}+1)^{1-\b}\leq|B_t(0)^c\cap{\O(\a,\b)}|\leq\frac{1}{\b-1}(t^{\frac{1}{\a}}-1)^{1-\b}+2(t^{\frac{1}{\a}}-2)^{-\b}
$$ 
which implies that $\M^{r}(\ty,\O(\a,\b))$ is different from $0$ and $+\ty$ if and only if $r+1=(1-\b)/\a$, i.e., if \eqref{dim(a,b)} holds.
%$$
%D:=\dim_B(\ty,\O(\a,\b))=\frac{1-(\a+\b)}{\a}\ \textrm{ and }\ \M^{D}(\ty,\O(\a,\b))=\frac{1}{\b-1}.
%$$
%In particular, $\O$ is Minkowski measurable at infinity.
\end{proof}

As we can see, the Minkowski content in the above case depends only on the parameter $\b$, i.e., the rate at which $\O(\a,\b)$ ``escapes'' to infinity is not relevant for it.
Furthermore, by changing the values of parameters $\a$ and $\b$, we can obtain any prescribed value in $(-\ty,-1)$ for $\dim_B(\ty,\O(\a,\b))$.
Moreover, we have that $\dim_B(\ty,\O(\a,\b))\to -\ty$ and $\M^D(\ty,\O(\a,\b))\to 0$ as $\b\to+\ty$.

\begin{proposition}\label{standardniprim}
For $\a>1$ let $\O:=\{(x,y)\in\eR^2\,:\,x>1,\ 0<y<x^{-\a}\}$. Then we have that
\begin{equation}\label{dimstandardniprim}
D:=\dim_B(\ty,\O)=-1-\a\quad \mathrm{ and }\quad \M^{D}(\ty,\O)=\frac{1}{\a-1}.
\end{equation}
\end{proposition}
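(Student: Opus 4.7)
My plan is to reduce the computation of $|B_t(0)^c\cap\O|$ to the elementary integral $\int_t^{+\infty}x^{-\a}\di x=\frac{t^{1-\a}}{\a-1}$ by exploiting the fact that on $\O$ the vertical extent $y<x^{-\a}$ is very small compared with the horizontal coordinate. First I would verify that the setting of Section~\ref{inf_box_def} applies: since $\a>1$, $|\O|=\int_1^{+\infty}x^{-\a}\di x=\frac{1}{\a-1}<+\infty$, so the notions of Minkowski content and box dimension at infinity are defined.

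Next, I split the region $B_t(0)^c\cap\O$ using the curve $x^2+y^2=t^2$. Writing $f(x):=x^2+x^{-2\a}$, for $t$ large there is a unique $x_t>1$ with $f(x_t)=t^2$, and $t-x_t=O(t^{-2\a-1})$ (solve $x_t^2=t^2-x_t^{-2\a}$ asymptotically, using $x_t\sim t$). Then
\begin{equation*}
|B_t(0)^c\cap\O|=\int_t^{+\infty}x^{-\a}\di x+\int_{x_t}^{t}\bigl(x^{-\a}-\sqrt{t^2-x^2}\,\bigr)\di x,
\end{equation*}
since for $x>t$ the full vertical slice $(0,x^{-\a})$ of $\O$ lies outside $B_t(0)$, for $x_t<x<t$ only the portion $(\sqrt{t^2-x^2},x^{-\a})$ does, and for $1<x<x_t$ the whole slice lies inside $B_t(0)$.

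The first integral equals $\frac{t^{1-\a}}{\a-1}$ exactly. For the second, both $\int_{x_t}^t x^{-\a}\di x$ and $\int_{x_t}^t\sqrt{t^2-x^2}\di x$ are of order $t^{-\a}(t-x_t)=O(t^{-3\a-1})$ and $\sqrt{t}(t-x_t)^{3/2}=O(t^{-3\a-1})$ respectively (the latter by the substitution $u=t-x$ and $\sqrt{t^2-x^2}\sim\sqrt{2tu}$). Since $\a>1$, both are $o(t^{1-\a})$ as $t\to+\infty$, so
\begin{equation*}
|B_t(0)^c\cap\O|=\frac{t^{1-\a}}{\a-1}+o(t^{1-\a}).
\end{equation*}

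Dividing by $t^{N+r}=t^{2+r}$, the limit is $0$ for $r>-1-\a$, $+\infty$ for $r<-1-\a$, and equal to $\frac{1}{\a-1}$ for $r=-1-\a$. This simultaneously yields $\dim_B(\ty,\O)=-1-\a$ and $\M^{-1-\a}(\ty,\O)=\frac{1}{\a-1}$, establishing \eqref{dimstandardniprim}. The only mildly technical step is the asymptotic control $t-x_t=O(t^{-2\a-1})$, but even a crude bound such as $t-x_t\le t^{-2\a}$ (obtained from $t^2-x_t^2=x_t^{-2\a}\le 1$ and $t+x_t\ge t$) is more than enough, since it already makes the correction $O(t^{-3\a})=o(t^{1-\a})$.
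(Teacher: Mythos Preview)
Your argument is correct and follows essentially the same route as the paper's: both introduce the cutoff $x_t$ determined by $x_t^2+x_t^{-2\a}=t^2$ and show that the sliver $x_t<x<t$ contributes negligibly compared with $\int_t^{+\infty}x^{-\a}\di x=\frac{t^{1-\a}}{\a-1}$. The paper's execution is a touch slicker---it simply sandwiches $|B_t(0)^c\cap\O|$ between $\int_t^{+\infty}x^{-\a}\di x$ and $\int_{x_t}^{+\infty}x^{-\a}\di x$ and observes that $x_t/t\to 1$, avoiding any explicit estimate of $t-x_t$---whereas you compute the sliver exactly and bound each piece.

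One small slip: your parenthetical ``crude bound'' $t-x_t\le t^{-2\a}$ does not follow from $x_t^{-2\a}\le 1$ and $t+x_t\ge t$; those hypotheses give only $t-x_t\le 1/t$, which is weaker when $\a>1$. This is harmless, since your main asymptotic $t-x_t=O(t^{-2\a-1})$ is correct and already more than enough (and in fact even $t-x_t\le 1/t$ suffices once combined with the sharper first-integral bound $\int_{x_t}^t x^{-\a}\di x\le x_t^{-\a}(t-x_t)$ and the fact that the $\sqrt{t^2-x^2}$ term is nonnegative and hence can simply be dropped for an upper bound).
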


\begin{proof}
Let $t>1$ and let $x(t)$ be such that
\begin{equation}\label{xxx}
x(t)^2+x(t)^{-2\a}=t^2.
\end{equation}
Then we have
$
\int_t^{+\ty}x^{-\a}\di x\leq|B_t(0)^c\cap\O|\leq\int_{x(t)}^{+\ty}x^{-\a}\di x
$
which implies that
$
\frac{1}{1-\a}\leq\frac{|B_t(0)^c\cap\O|}{t^{1-\a}}\leq\frac{1}{1-\a}\left(\frac{x(t)}{t}\right)^{1-\a}.
$
Furthermore, from \eqref{xxx} we have that
$
\frac{x(t)}{t}=(1+x(t)^{-2(\a+1)})^{-\frac{1}{2}}\to 1,
$
as $t\to+\infty$, and we conclude that~\eqref{dimstandardniprim} holds. 
\end{proof}

\begin{remark}
Note that $\dim_B(\ty,\O)\to -\ty$ and $\M^{D}(\ty,\O)\to 0$ as $\alpha\to +\ty$.
\end{remark}

Next we will prove a useful lemma which states that the box dimension and Minkowski measurability at infinity are independent on the choice of the norm on $\eR^N$ in a sense that we can replace the ball $B_t(0)$ in the definition of the Minkowski content at infinity with a ball in any other norm on $\eR^N$.
More precisely, let $\|\cdot\|$ be another norm on $\eR^N$.
We denote by $K_t(0)$ the open ball of radius $t$ around $0$ in the new norm; define the associated upper Minkowski content at infinity
$$
{\ovb{\mathcal{N}}}^{r}(\ty,\O):=\limsup_{t\to+\ty}\frac{|K_t(0)^c\cap\O|}{t^{N+r}}
$$
and analogously, ${\unb{\mathcal{N}}}^{r}(\ty,\O)$ and ${\mathcal{N}}^{r}(\ty,\O)$.   

\begin{lemma}\label{eqnorms}
Let $\O\subseteq\eR^N$ with $|\O|<\infty$ and assume that two norms, $|\cdot|$ and $\|\cdot\|$, are given on $\eR^N$, i.e., there are $a,b>0$ such that $a|\cdot|\leq\|\cdot\|\leq b|\cdot|$. Then, for any $r\in\eR$ we have
\begin{equation}
a^{-(N+r)}{\ovb{\mathcal{M}}}^{r}(\ty,\O)\leq{\ovb{\mathcal{N}}}^{r}(\ty,\O)\leq b^{-(N+r)}{\ovb{\mathcal{M}}}^{r}(\ty,\O),
\end{equation}
and analogously for the corresponding lower Minkowski contents. 
%\begin{equation}
%a^{-(N+r)}{\unb{\mathcal{M}}}^{r}(\ty,\O)\leq{\unb{\mathcal{N}}}^{r}(\ty,\O)\leq b^{-(N+r)}{\unb{\mathcal{M}}}^{r}(\ty,\O).
%\end{equation}
\end{lemma}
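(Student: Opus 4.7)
The plan is to translate the hypothesis on the two norms into a purely geometric sandwich of balls and then feed that sandwich into the definition of the upper Minkowski content at infinity. From $a|x|\leq\|x\|\leq b|x|$ one immediately obtains, for every $t>0$, the inclusions
$$B_{t/b}(0)\stq K_t(0)\stq B_{t/a}(0),$$
and hence, taking complements and intersecting with $\O$,
$$B_{t/a}(0)^c\cap\O\stq K_t(0)^c\cap\O\stq B_{t/b}(0)^c\cap\O.$$
Monotonicity of the Lebesgue measure then yields the corresponding chain of inequalities between the three associated tube values at $t$.

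Next I would divide through by $t^{N+r}$ and rewrite each of the outer quotients as a rescaled version of the quotient appearing in the definition of $\ovb{\M}^r(\ty,\O)$. With the substitution $s=t/a$ one has
$$\frac{|B_{t/a}(0)^c\cap\O|}{t^{N+r}}=a^{-(N+r)}\,\frac{|B_s(0)^c\cap\O|}{s^{N+r}},$$
while with $u=t/b$ one obtains
$$\frac{|B_{t/b}(0)^c\cap\O|}{t^{N+r}}=b^{-(N+r)}\,\frac{|B_u(0)^c\cap\O|}{u^{N+r}}.$$
Since both substitutions are order-preserving bijections of $(0,+\ty)$ onto itself, taking $\limsup_{t\to+\ty}$ in the sandwich turns the lower bound into $a^{-(N+r)}\,\ovb{\M}^r(\ty,\O)$ and the upper bound into $b^{-(N+r)}\,\ovb{\M}^r(\ty,\O)$, which is precisely the claimed estimate for ${\ovb{\mathcal{N}}}^r(\ty,\O)$. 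The inequality for the lower Minkowski contents is obtained by the same argument with $\liminf$ in place of $\limsup$.

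I do not expect any real obstacle here; the only point demanding a bit of care is that $r\in\eR$ is unconstrained, so $N+r$ may be negative and one should not be tempted to combine the two scaling constants into a single comparison of $a^{-(N+r)}$ and $b^{-(N+r)}$. Each bound has to be tracked through its own rescaling, as above, and the two estimates are only assembled into the two-sided inequality at the end.
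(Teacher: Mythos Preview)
Your proof is correct and follows essentially the same argument as the paper: derive the ball inclusions $B_{t/b}(0)\subseteq K_t(0)\subseteq B_{t/a}(0)$ from the norm comparison, pass to complements, divide by $t^{N+r}$, rescale, and take the upper (resp.\ lower) limit. The paper presents this slightly more compactly by writing the rescaled quotients directly in one displayed inequality, but the content is identical.
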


\begin{proof}
From $a|x|\leq\|x\|\leq b|x|$ we have that $B_{t/b}(0)\subseteq K_t(0)\subseteq B_{t/a}(0)$ for any $t>0$ and, consequently,
$$
a^{-(N+r)}\frac{|B_{t/a}(0)^c\cap\O|}{\left(\frac{t}{a}\right)^{N+r}}\leq\frac{|K_t(0)^c\cap\O|}{t^{N+r}}\leq b^{-(N+r)}\frac{|B_{t/b}(0)^c\cap\O|}{\left(\frac{t}{b}\right)^{N+r}}.
$$
Taking the upper limit as $t\to+\ty$, we obtain the first statement of the lemma.
The second one is obtained by taking the lower limit instead of the upper.
\end{proof}

\begin{corollary}
Let $\O$ be an arbitrary Lebesgue measurable subset of $\eR^N$ with finite $N$-dimensional Lebesgue measure.
Then

$(a)$ The upper and lower box dimensions of $\O$ at infinity do not depend on the choice of the norm on $\eR^N$ in which we measure the neighborhood of infinity.
	
$(b)$ The Minkowski nondegeneracy of $\O$ is independent of the choice of the norm on $\eR^N$ in which we measure the neighborhood of infinity.
\end{corollary}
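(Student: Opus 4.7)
The plan is to deduce both (a) and (b) directly from Lemma \ref{eqnorms} by exploiting the fact that, for any fixed $r \in \eR$, the constants $a^{-(N+r)}$ and $b^{-(N+r)}$ are strictly positive and finite. In other words, the two-sided estimate in the lemma says that ${\ovb{\mathcal{M}}}^{r}(\ty,\O)$ and ${\ovb{\mathcal{N}}}^{r}(\ty,\O)$ differ only by positive multiplicative constants, and the same holds for the lower Minkowski contents.

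For part (a), I would start from the characterization \eqref{updiminf}, namely
$$
{\ovb{\dim}}_B(\ty,\O)=\inf\{r\in\eR:{\ovb{\mathcal{M}}}^{r}(\ty,\O)=0\}.
$$
Since $a^{-(N+r)} > 0$ and $b^{-(N+r)} > 0$, Lemma \ref{eqnorms} implies that ${\ovb{\mathcal{M}}}^{r}(\ty,\O)=0$ if and only if ${\ovb{\mathcal{N}}}^{r}(\ty,\O)=0$, and similarly for the value $+\ty$. Hence the two sets over which the infimum (resp.\ supremum) is taken coincide, and so ${\ovb{\dim}}_B(\ty,\O)$ computed with the $\|\cdot\|$-balls equals that computed with the $|\cdot|$-balls. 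The identical argument, using the lower Minkowski contents, handles ${\unb{\dim}}_B(\ty,\O)$.

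For part (b), recall that $\O$ is Minkowski nondegenerate at infinity at the value $D:=\dim_B(\ty,\O)$ if
$$
0<{\unb{\mathcal{M}}}^{D}(\ty,\O)\leq{\ovb{\mathcal{M}}}^{D}(\ty,\O)<+\ty.
$$
By part (a), the value of $D$ is the same regardless of the chosen norm, so it suffices to verify that the double inequality above transfers between norms. Multiplying by the positive finite constants $a^{-(N+D)}$ and $b^{-(N+D)}$ in Lemma \ref{eqnorms} preserves strict positivity and strict finiteness of both ${\unb{\mathcal{M}}}^{D}(\ty,\O)$ and ${\ovb{\mathcal{M}}}^{D}(\ty,\O)$, so the analogous inequalities for ${\unb{\mathcal{N}}}^{D}(\ty,\O)$ and ${\ovb{\mathcal{N}}}^{D}(\ty,\O)$ follow immediately.

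There is essentially no obstacle here — the whole content is packaged into Lemma \ref{eqnorms}. The only minor care needed is to note that $a^{-(N+r)}$ and $b^{-(N+r)}$ are positive and finite for every $r \in \eR$ (which is automatic since $a,b>0$), so multiplication by these constants preserves the properties ``equals $0$'', ``equals $+\ty$'', and ``lies in $(0,+\ty)$''.
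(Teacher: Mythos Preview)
Your proposal is correct and matches the paper's intended argument; the paper states this corollary without proof, treating it as an immediate consequence of Lemma~\ref{eqnorms}, and your write-up simply spells out the obvious deduction.
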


%A similar result holds also in the standard setting where we have a relative fractal drum $(A,\O)$ with $|\O|<\infty$.
%More precisely, in order to calculate ${\ovb{\dim}}_B(A,\O)$ and ${\unb{\dim}}_B(A,\O)$ we can measure the $\delta$-neighborhood of $A$ in any metric on $\eR^N$ that is equivalent to the Euclidean one.

There are special cases when we even get the same values for the Minkowski contents for different norms on $\eR^N$.
One of these cases is addressed in the next lemma which will prove to be useful in some of the future calculations.
It can easily be generalized to the $N$-dimensional case but we will need it only in the case of $\eR^2$. 

\begin{lemma}\label{infty_norm}
Let $\O\subseteq\eR^2$ with $|\O|<\infty$ such that $\O$ is a subset of a horizontal $($vertical$)$ strip of finite width.
Let $K_t(0)$ be an open ball in the $|\cdot|$-norm of radius $t>0$ with center at the origin and $r$ a real number.
Then, we have that
$
{{\ovb{\mathcal{M}}}}^{r}(\ty,\O)={{\ovb{\mathcal{N}}}}^{r}(\ty,\O)$ and ${\unb{{\mathcal{M}}}}^{r}(\ty,\O)={\unb{{\mathcal{N}}}}^{r}(\ty,\O).
$
\end{lemma}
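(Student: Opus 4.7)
The plan is to bracket the Euclidean tube function at infinity between two evaluations of the $\|\cdot\|$-tube function at slightly perturbed radii, using the strip assumption to control how $\pa B_t(0)$ cuts through $\O$ as $t\to+\infty$. Without loss of generality I assume $\O\subseteq\{(x,y)\in\eR^2:|y|\le M\}$ for some $M>0$, the vertical strip case being symmetric. For every $t>M$, since $|y|\le M<t$ for all $(x,y)\in\O$, a point of $\O$ lies outside the square $K_t(0)=(-t,t)^2$ exactly when $|x|\ge t$.

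The central step is to establish, for all $t>M$, the sandwich
\[
K_t(0)^c\cap\O\;\subseteq\;B_t(0)^c\cap\O\;\subseteq\;K_{s(t)}(0)^c\cap\O,\qquad s(t):=\sqrt{t^2-M^2}.
\]
The first inclusion is immediate from the preceding observation together with $|x|\ge t\Rightarrow x^2+y^2\ge t^2$. The second follows by noting that $x^2+y^2\ge t^2$ combined with $y^2\le M^2$ forces $|x|\ge s(t)$, placing $(x,y)$ outside $K_{s(t)}(0)$.

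Taking Lebesgue measures and normalising by $t^{N+r}$, the chain becomes
\[
\frac{|K_t(0)^c\cap\O|}{t^{N+r}}\;\le\;\frac{|B_t(0)^c\cap\O|}{t^{N+r}}\;\le\;\frac{|K_{s(t)}(0)^c\cap\O|}{s(t)^{N+r}}\left(\frac{s(t)}{t}\right)^{N+r}.
\]
Since $s(t)/t\to 1$ and the map $s=s(t)$ is a continuous, strictly increasing bijection from $(M,+\infty)$ onto $(0,+\infty)$, one has $\limsup_{t\to+\infty}\f(s(t))=\limsup_{s\to+\infty}\f(s)$ for any function $\f$. Passing to $\limsup$ across the above chain collapses it to ${\ovb{\mathcal{N}}}^r(\ty,\O)\le{\ovb{\mathcal{M}}}^r(\ty,\O)\le{\ovb{\mathcal{N}}}^r(\ty,\O)$; passing to $\liminf$ yields the analogous statement for the lower contents.

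I anticipate no serious obstacle: the only delicate point is the compatibility of the substitution $s(t)$ with $\limsup$/$\liminf$, which is immediate from the monotonicity of $s(\cdot)$ and the fact $s(t)/t\to 1$. The argument is genuinely two-dimensional only through the strip hypothesis, and would extend verbatim to $\O$ contained in a slab between two parallel hyperplanes in $\eR^N$.
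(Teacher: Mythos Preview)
Your proof is correct and follows essentially the same route as the paper: sandwich $|B_t(0)^c\cap\O|$ between $|K_t(0)^c\cap\O|$ and $|K_{\sqrt{t^2-d^2}}(0)^c\cap\O|$ via the strip hypothesis, then pass to $\limsup$/$\liminf$ using $\sqrt{t^2-d^2}/t\to 1$. Your write-up is in fact slightly tidier, since you work directly in the full strip $\{|y|\le M\}$ rather than reducing to a half-strip, and you have the inclusions in the correct order (the paper's printed chain has the two $K$-terms swapped, a typo that does not affect the conclusion).
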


\begin{proof}
Without loss of generality we will assume that the set $\O$ is contained in the horizontal half-strip $\{(x,y)\,:\,x\geq 0,\ 0\leq y\leq d\}$.
Then, for $t\geq d$ we have that $|K_{\sqrt{t^2-d^2}}(0)^c\cap\O|\leq|B_t(0)\cap\O|\leq|K_t(0)^c\cap\O|$ and consequently for $r\in\eR$
$$
\frac{(\sqrt{t^2-d^2})^{N+r}}{t^{N+r}}\frac{|K_{\sqrt{t^2-d^2}}(0)^c\cap\O|}{(\sqrt{t^2-d^2})^{N+r}}\leq\frac{|B_t(0)\cap\O|}{t^{N+r}}\leq\frac{|K_t(0)^c\cap\O|}{t^{N+r}}.
$$
Taking the upper and lower limits as $t\to +\ty$ completes the proof. 
\end{proof}
In the next example we will show that the value $\dim_B(\ty,\O)=-\infty$ can be achieved.
\begin{example}
Let $\O:=\{(x,y)\in\eR^2\,:\,x>1,\ 0<y<\mathrm{e}^{-x}\}$ and let us calculate the box dimension of $\O$ at infinity using the $|\cdot|_{\ty}$-ball in $\eR^2$:
$
|K_t(0)^c\cap\O|=\int_t^{+\ty}\mathrm{e}^{-x}\di x=\mathrm{e}^{-t}.
$
Consequently, we have that
$
\frac{|K_t(0)^c\cap\O|}{t^{2+r}}=\frac{\mathrm{e}^{-t}}{t^{2+r}}\to 0
$
when $t\to+\ty$ for every $r\in\eR$ and therefore $\dim_B(\ty,\O)=-\ty$.
\end{example}

\begin{remark}
From now on, we will always implicitly assume that $\ovb{\dim}_B(\ty,\O)>-\infty$ when dealing with relative fractal drums of the type $(\ty,\O)$ (unless stated otherwise).
\end{remark}

As we have shown in Proposition~\ref{<=-N}, the upper box dimension of any subset of the plane of finite Lebesgue measure does not exceed $-2$.
The next proposition will show that the value $-2$ can be achieved and it can be easily adapted for constructing a subset $\O$ of $\eR^N$ with finite Lebesgue measure such that ${\ovb{\dim}}_B(\ty,\O)=-N$. 

\begin{proposition}\label{prim_D=-2}
There exists a Lebesgue measurable subset $\O\subseteq\eR^2$ with $|\O|<\ty$ such that
\begin{equation}\label{-2_dim}
\dim_B(\ty,\O)=-2\quad \textrm{ and }\quad \M^{-2}(\ty,\O)=0.
\end{equation} 
\end{proposition}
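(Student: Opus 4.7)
The plan is to construct $\Omega$ as a thin curvilinear region sitting along the positive $x$-axis whose vertical height decays at the critical rate for $|\Omega|<\infty$. Concretely, I would fix some $p>1$ and define
\[
\O:=\bigl\{(x,y)\in\eR^2 : x>\E,\ 0<y<\tfrac{1}{x\log^p x}\bigr\}.
\]
The logarithmic correction is essential: the bare decay $y<x^{-1}$ would give $|\O|=+\ty$, while any decay $y<x^{-\a}$ with $\a>1$ (as in Proposition~\ref{standardniprim}) gives $\dim_B(\ty,\O)=-1-\a<-2$. The choice above lies exactly on the borderline and should produce dimension $-2$ with degenerate Minkowski content.

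First I would verify finiteness of Lebesgue measure by the substitution $u=\log x$:
\[
|\O|=\int_{\E}^{+\ty}\frac{\di x}{x\log^p x}=\int_{1}^{+\ty}\frac{\di u}{u^p}=\frac{1}{p-1}<+\ty.
\]
Since on $x>\E$ one has $y<1/(x\log^p x)\leq 1/\E$, the set $\O$ lies in a horizontal strip of finite width, so by Lemma~\ref{infty_norm} I may replace the Euclidean ball $B_t(0)$ by the $|\cdot|_\ty$-ball $K_t(0)$ when computing Minkowski contents at infinity. For $t>\E$ (hence $t>1/\E$, so the strip condition gives no contribution from $|y|>t$), the same substitution yields
\[
|K_t(0)^c\cap\O|=\int_{t}^{+\ty}\frac{\di x}{x\log^p x}=\frac{(\log t)^{1-p}}{p-1}.
\]

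Then I would split into three cases for $r\in\eR$ and examine
\[
\frac{|K_t(0)^c\cap\O|}{t^{2+r}}=\frac{(\log t)^{1-p}}{(p-1)\,t^{2+r}}.
\]
For $r>-2$ both factors tend to $0$, so $\overline{\mathcal N}^r(\ty,\O)=0$. For $r=-2$ the ratio equals $(\log t)^{1-p}/(p-1)\to 0$ since $p>1$, so $\mathcal N^{-2}(\ty,\O)=0$. For $r<-2$ the ratio is $t^{|2+r|}/((p-1)(\log t)^{p-1})\to+\ty$ because any positive power of $t$ beats a logarithm, so $\unb{\mathcal N}^r(\ty,\O)=+\ty$. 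Combining the three cases with \eqref{updiminf} gives $\ovb{\dim}_B(\ty,\O)=\unb{\dim}_B(\ty,\O)=-2$, and invoking Lemma~\ref{infty_norm} once more translates the $\mathcal N$-computation back to $\mathcal M$, establishing both equalities in~\eqref{-2_dim}.

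The only genuine obstacle is choosing a decay profile that is simultaneously integrable and exactly critical for the scaling $t^{2+r}$; once the logarithmic correction $\log^{-p}x$ with $p>1$ is in place, the rest is an elementary integral and three limit comparisons. Everything else (the strip reduction, the box-dimension characterization via the contents) is already available from Lemma~\ref{infty_norm} and the definitions of Section~\ref{inf_box_def}.
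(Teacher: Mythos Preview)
Your proof is correct. The logarithmic profile $y<1/(x\log^p x)$ with $p>1$ lands exactly on the borderline, and the three-case analysis of $(\log t)^{1-p}/t^{2+r}$ together with Lemma~\ref{infty_norm} cleanly gives both conclusions in~\eqref{-2_dim}.

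The paper takes a different route. Rather than a single set at the critical decay rate, it builds $\O$ as a countable disjoint union of vertically stacked regions $\O_k$, each of the form $\{x>1,\ 0<y<c_k x^{-\a_k}\}$ with $\a_k=1+1/k$ and $c_k=2^{-k}/k$. Each piece has $\dim_B(\ty,\O_k)=-2-1/k$ by Proposition~\ref{standardniprim}, so Corollary~\ref{subset_inf_dim} gives $\ovb{\dim}_B(\ty,\O)\geq -2-1/k$ for every $k$, while Proposition~\ref{<=-N} caps it at $-2$; the content statement then follows from Remark~\ref{=-N}. Your argument is more elementary and self-contained: one integral, one limit comparison, no appeal to monotonicity or to the upper bound $-N$. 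The paper's construction, however, is not chosen for simplicity but because it is reused in Proposition~\ref{primD=-2_zeta}: the corresponding Lapidus zeta function at infinity is computed explicitly as $\sum_k \frac{2^{-k}}{k(s+2+1/k)}$, exhibiting an accumulation of simple poles at $s=-2$ and hence a set whose zeta function admits no meromorphic extension past the critical line. Your set would not display that phenomenon, so while it proves the proposition more directly, it would not serve the paper's later purpose.
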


\begin{proof}
Let $\a_k:=1+1/k$ for $k\geq 1$ and we define
$$
\widetilde{\O}_k:=\left\{(x,y)\in\eR^2\,:\,x>1,\ 0<y<\frac{2^{-k}}{k}x^{-\a_k}\right\}.
$$
We will ``stack'' the sets $\widetilde{\O}_k$ on top of each other.
In order to do so, we define $\O_k$ to be an $S_k$-translated image of $\widetilde{\O}_k$ along the $y$-axis where
$
S_k:=\sum_{j=1}^k{2^{-j}}{j}^{-1}
$
and define $\O:=\cup_{k\geq 1}\O_k$.
We observe that $\O$ is contained in the horizontal strip of finite height
$
\{(x,y)\in\eR^2\,:\,1/2\leq y\leq S\},
$
where $S:=\lim_{k\to\ty}S_k=\log 2$.
Furthermore, we have that
$$
|\O_k|=|\widetilde{\O}_k|=\frac{2^{-k}}{k}\int_{1}^{+\ty}x^{-1-\frac{1}{k}}\di x=\frac{2^{-k}}{k}\cdot k=2^{-k}
$$
so that $|\O|=\sum_{k=1}^{\ty}2^{-k}=1$.
Using the same calculation as in Proposition~\ref{standardniprim} yields
$$
D_k:=\dim_B(\ty,\O_k)=-1-\a_k=-2-\frac{1}{k}\quad \textrm{ and }\quad \M^{D_k}(\ty,\O_k)=2^{-k}.
$$
Finally, by using Corollary~\ref{subset_inf_dim} we have that $-2\geq\dim_B(\ty,\O)\geq D_k$ for every $k\geq 1$ which implies~\eqref{-2_dim}.
\end{proof}

\section{Holomorphicity of Lapidus Zeta Functions at Infinity}\label{holo_ty_La}

Let $\O\subseteq\eR^N$ be a measurable set with $|\O|<\ty$. We define the  {\em Lapidus zeta function} of $\O$ {\em at infinity} by the Lebesgue integral

\begin{equation}\label{infzeta}
\zeta_{\ty,\O}(s):=\int_{{B_T(0)^c\cap\O}}|x|^{-s-N}\di x,
\end{equation}
for a fixed $T>0$ and $s$ in $\Ce$ with $\re s$ sufficiently large.
We will also call this zeta function the {\em distance zeta function of $\O$ at infinity} and use the two notions interchangeably.
From now on, our main goal will be to show that this new zeta function has analogous properties as the distance zeta function for relative fractal drums studied in \cite{fzf,ra}.
First of all, the dependence of the distance zeta function at infinity on $T>0$ is inessential in the sense that for $0<T_1<T_2$ the difference
$$
\zeta_{\ty,\O}(s;T_1)-\zeta_{\ty,\O}(s;T_2)=\int_{B_{T_1,T_2}(0)\cap\O}|x|^{-s-N}\di x,
$$
with
\begin{equation}
B_{a,b}(0):=\{x\in\eR^N\,:\, a<|x|<b\},
\end{equation}
is an entire function of $s$.
Indeed, since $T_1\leq|x|\leq T_2$ for $x\in E$ this will follow from Theorem \ref{an2}$(c)$ with $E:=B_{T_1,T_2}(0)\cap\O$, $\varphi(x):=|x|$ and $\di\mu(x):=|x|^{-N}\di x$ in the notation of that theorem.
Therefore, from now on, we will emphasize the dependence of the Lapidus zeta function of $\O$ at infinity on $T$ and write $\zeta_{\ty,\O}(s;T)$ only when it is explicitly needed.
Also note that if $\O$ is bounded, then for $T$ sufficiently large, we have that $\zeta_{\ty,\O}(s;T)\equiv 0$.

The definition of the Lapidus zeta function of $\O$ at infinity is, as we will demonstrate immediately, closely related to the distance zeta function of a certain relative fractal drum.
This relative fractal drum is actually the image of $(\ty,\O)$ under the geometric inversion in $\eR^N$, i.e., it is equal to $(\mathbf 0,\Phi(\O))$\footnote{We should actually write $(\{\mathbf 0\},\Phi(\O))$ here, but we will always abuse notation in this way for a relative fractal drum $(A,\O)$ when the set $A$ consists of a single point.}, where
\begin{equation}\label{def_inv}
\Phi(x):=\frac{x}{|x|^2}
\end{equation}
and $\mathbf{0}$ is the origin.
To derive the mentioned relation we will need to compute the Jacobian of the geometric inversion and use the change of variables formula for the Lebesgue integral.
To compute the Jacobian we will use the well-known Matrix determinant lemma (see, e.g., \cite{Har}) which we state here for the sake of exposition.

\begin{lemma}[Matrix determinant lemma]\label{MDL}
Let $\mathbf{A}$ be an invertible matrix and $\mathbf{u}$, $\mathbf{v}$ column vectors. Then we have that
$
\det(\mathbf{A}+\mathbf{u}\otimes\mathbf{v})=(1+\mathbf{v}^{\tau}\mathbf{A}^{-1}\mathbf{u})\det\mathbf{A},
$
where
$
\mathbf{u}\otimes\mathbf{v}:=\mathbf{u}\mathbf{v}^{\tau}
$
and $\tau$ denotes the transpose operator.
\end{lemma}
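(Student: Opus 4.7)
My plan is to reduce the identity to the computation of a single block determinant evaluated in two different ways. Introduce the $(N+1)\times(N+1)$ bordered matrix
$$
M := \begin{pmatrix} \mathbf{A} & \mathbf{u} \\ -\mathbf{v}^{\tau} & 1 \end{pmatrix}.
$$
Both diagonal blocks of $M$, namely the invertible matrix $\mathbf{A}$ and the scalar $1$, are invertible, which means $M$ admits two different block-triangular factorizations: one obtained by Gaussian elimination against $\mathbf{A}$, and one by elimination against the scalar block.

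Eliminating the lower-left block $-\mathbf{v}^{\tau}$ against the pivot $\mathbf{A}$ yields
$$
M = \begin{pmatrix} I & 0 \\ -\mathbf{v}^{\tau}\mathbf{A}^{-1} & 1 \end{pmatrix}\begin{pmatrix} \mathbf{A} & \mathbf{u} \\ 0 & 1+\mathbf{v}^{\tau}\mathbf{A}^{-1}\mathbf{u} \end{pmatrix},
$$
whose determinant equals $\det(\mathbf{A})\bigl(1+\mathbf{v}^{\tau}\mathbf{A}^{-1}\mathbf{u}\bigr)$ by block triangularity of both factors. Dually, eliminating the top-right block $\mathbf{u}$ against the pivot $1$ yields
$$
M = \begin{pmatrix} I & \mathbf{u} \\ 0 & 1 \end{pmatrix}\begin{pmatrix} \mathbf{A}+\mathbf{u}\mathbf{v}^{\tau} & 0 \\ -\mathbf{v}^{\tau} & 1 \end{pmatrix},
$$
whose determinant equals $\det(\mathbf{A}+\mathbf{u}\mathbf{v}^{\tau})$. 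Equating the two values of $\det M$ delivers the desired formula.

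The only verification required is that the two displayed matrix products actually recover $M$, and this is a short direct computation that does not hide any subtlety. There is no genuine obstacle: the argument is a classical application of the Schur complement, and the only ``trick'' is the choice of bordered matrix $M$, which is made so that the left- and right-hand sides of the claimed identity emerge as the determinants of its two natural block-triangular factorizations. As an alternative route, one could first factor $\mathbf{A}+\mathbf{u}\mathbf{v}^{\tau} = \mathbf{A}(I+\mathbf{A}^{-1}\mathbf{u}\mathbf{v}^{\tau})$ and then compute $\det(I+\mathbf{x}\mathbf{v}^{\tau})$ for $\mathbf{x}:=\mathbf{A}^{-1}\mathbf{u}$ by noting that the rank-one perturbation $I+\mathbf{x}\mathbf{v}^{\tau}$ acts as the identity on the $(N-1)$-dimensional kernel of $\mathbf{v}^{\tau}$ and has eigenvalue $1+\mathbf{v}^{\tau}\mathbf{x}$ on the remaining direction, so its determinant is $1+\mathbf{v}^{\tau}\mathbf{x}$.
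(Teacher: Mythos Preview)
Your proof is correct. Both block factorizations of the bordered matrix $M$ check out by direct multiplication, and the two resulting evaluations of $\det M$ give exactly the claimed identity; the eigenvalue argument you sketch as an alternative is also valid.

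As for comparison with the paper: there is nothing to compare. The paper does not prove this lemma at all --- it is stated ``for the sake of exposition'' with a reference to Harville's textbook \cite{Har}, and then used as a black box in the computation of the Jacobian of the geometric inversion. Your Schur-complement argument is one of the standard proofs of this classical fact and would serve perfectly well if one wanted to make the paper self-contained at this point.
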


\begin{lemma}\label{jacobian}
Let $\Phi(x):=x/|x|^2$ be the geometric inversion on $\eR^N$.
Then for the Jacobian of $\Phi$ we have$:$
$
\det\frac{\pa\Phi}{\pa x}=-|x|^{-2N}.
$
\end{lemma}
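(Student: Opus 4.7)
The plan is a direct computation using the Matrix determinant lemma stated in Lemma~\ref{MDL}. First I would write out the components of $\Phi$ as $\Phi_i(x)=x_i/|x|^2$ and compute the partial derivatives explicitly. A quick calculation using $\partial|x|^2/\partial x_j = 2x_j$ gives
\begin{equation*}
\frac{\partial \Phi_i}{\partial x_j}=\frac{\delta_{ij}}{|x|^2}-\frac{2x_ix_j}{|x|^4},
\end{equation*}
so the Jacobian matrix can be written in the clean form
\begin{equation*}
\frac{\pa\Phi}{\pa x}=\frac{1}{|x|^2}\,I-\frac{2}{|x|^4}\,x\otimes x,
\end{equation*}
where $I$ is the $N\times N$ identity matrix and $x\otimes x=xx^{\tau}$ as in Lemma~\ref{MDL}.

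Next I would pull out the scalar factor $|x|^{-2}$ from each of the $N$ rows, picking up the multiplicative factor $|x|^{-2N}$ and leaving inside the determinant
\begin{equation*}
\det\frac{\pa\Phi}{\pa x}=|x|^{-2N}\,\det\!\left(I-\frac{2}{|x|^2}\,x\otimes x\right).
\end{equation*}
At this point I would apply Lemma~\ref{MDL} with the choices $\mathbf{A}=I$, $\mathbf{u}=-\frac{2}{|x|^2}x$ and $\mathbf{v}=x$. Since $\mathbf{v}^{\tau}\mathbf{A}^{-1}\mathbf{u}=-\frac{2}{|x|^2}x^{\tau}x=-2$, the lemma yields
\begin{equation*}
\det\!\left(I-\frac{2}{|x|^2}\,x\otimes x\right)=1+(-2)=-1,
\end{equation*}
and hence $\det\frac{\pa\Phi}{\pa x}=-|x|^{-2N}$, which is the claim.

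There is no real obstacle here; the only thing to be careful about is the bookkeeping of signs and the factor $|x|^{-2N}$ that comes out of the row scaling. A sanity check (which I would mention but not include in the formal proof) is that $\Phi$ is an orientation-reversing conformal map on $\eR^N\setminus\{0\}$ scaling lengths by $|x|^{-2}$, so its Jacobian determinant should have absolute value $|x|^{-2N}$ and negative sign, in agreement with the computation.
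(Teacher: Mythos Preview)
Your proof is correct and follows essentially the same approach as the paper: compute the partial derivatives to obtain the Jacobian matrix as a rank-one perturbation of a scalar multiple of the identity, then apply the Matrix determinant lemma. The only cosmetic difference is that the paper factors the Jacobian as $|x|^{-4}(|x|^2 I - 2x\otimes x)$ and applies Lemma~\ref{MDL} with $\mathbf{A}=|x|^2 I$, whereas you factor out $|x|^{-2}$ first and apply the lemma with $\mathbf{A}=I$; both routes are equivalent and yield the same result.
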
 

\begin{proof}
With $x=(x_1,\ldots,x_N)$ and $\delta_{ij}$ the Kronecker delta we have that
\begin{equation}
\begin{aligned}
\left(\frac{\pa\Phi}{\pa x}\right)_{ij}&=\frac{\pa\Phi_i}{\pa x_j}=\frac{\delta_{ij}}{|x|^2}-\frac{2x_ix_j}{|x|^4}\\
\end{aligned}
\end{equation}
and consequently
\begin{equation}\label{phi_der}
\frac{\pa\Phi}{\pa x}=\frac{1}{|x|^4}(|x|^2\mathbf{I}-2\,\mathbf{x}\otimes\mathbf{x}),
\end{equation}
where $\mathbf{x}:=[x_1,\ldots,x_N]^{\tau}$ and $\mathbf{I}$ is the identity matrix.
Now we can apply the matrix determinant lemma with $\mathbf{A}:=|x|^2\mathbf{I}$, $\mathbf{u}:=-2\mathbf{x}$ and $\mathbf{v}:=\mathbf{x}$ from which we obtain
$$
\begin{aligned}
\det\frac{\pa\Phi}{\pa{x}}&\!=\!\frac{1}{|x|^{4N}}(1\!-\!2\mathbf{x}^{\tau}(|x|^2\mathbf{I})^{-1}\mathbf{x})\det(|x|^2\mathbf{I})\!=\!\frac{(1\!-\!2|x|^{-2}\mathbf{x}^{\tau}\mathbf{x})|x|^{2N}}{|x|^{4N}}\!=\!-|x|^{-2N}\!.\\
\end{aligned}
$$
\end{proof}

The next theorem will show that, from the point of view of the distance zeta functions, there is no difference between the unbounded relative fractal drum $(\ty,\O)$ and the relative fractal drum $(\mathbf{0},\Phi(\O))$ obtained from it by geometric inversion.

\begin{theorem}\label{zeta_inversion}
Let $\O$ be a Lebesgue measurable subset of $\eR^N$ of finite measure, $\mathbf{0}$ the origin and fix $T>0$.
Then we have
\begin{equation}
\zeta_{\ty,\O}(s;T)=\zeta_{\mathbf{0},\Phi(\O)}(s;1/T).
\end{equation}
\end{theorem}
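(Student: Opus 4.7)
The plan is a direct change of variables using the geometric inversion $\Phi(x) = x/|x|^2$. The key observations are that $\Phi$ is an involution ($\Phi\circ\Phi = \mathrm{id}$ on $\eR^N\setminus\{\mathbf 0\}$), that $|\Phi(x)| = 1/|x|$, and that by Lemma~\ref{jacobian} we know $|\det\frac{\pa\Phi}{\pa x}| = |x|^{-2N}$. Together these mean $\Phi$ restricts to a $C^1$-diffeomorphism of $B_T(0)^c$ onto $B_{1/T}(\mathbf 0)\setminus\{\mathbf 0\}$, carrying $B_T(0)^c\cap\O$ bijectively onto $B_{1/T}(\mathbf 0)\cap\Phi(\O)$ (up to the measure-zero point $\mathbf 0$).

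First I would unpack the right-hand side. Since $A = \{\mathbf 0\}$ is a single point, its $\delta$-neighborhood is precisely the open ball $B_\delta(\mathbf 0)$ and the distance function is $d(y,\mathbf 0) = |y|$, so
\begin{equation*}
\zeta_{\mathbf 0,\Phi(\O)}(s;1/T) = \int_{B_{1/T}(\mathbf 0)\cap\Phi(\O)} |y|^{s-N}\di y.
\end{equation*}
Next I would perform the substitution $x = \Phi(y)$ in the integral defining $\zeta_{\ty,\O}(s;T)$. Because $\Phi$ is an involution, the pullback is also $\Phi$, and the domain transforms as $B_T(0)^c\cap\O = \Phi(B_{1/T}(\mathbf 0)\cap\Phi(\O))$ (modulo a null set). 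Under this substitution the integrand transforms via $|x|^{-s-N} = |\Phi(y)|^{-s-N} = |y|^{s+N}$, and the volume element becomes $\di x = |y|^{-2N}\di y$ by Lemma~\ref{jacobian}. Multiplying gives $|y|^{s+N}\cdot|y|^{-2N} = |y|^{s-N}$, which is precisely the integrand on the right-hand side.

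To justify applying the change-of-variables formula I would first restrict to the region of absolute convergence, which is nonempty (for $\re s$ sufficiently large the left-hand integrand is bounded on $B_T(0)^c$ since $|\O|<\ty$, while the right-hand integrand is dominated near $\mathbf 0$ because $|\Phi(\O)|$ is finite and $\re(s-N)>-N$). On this common half-plane both integrals are finite and the classical change-of-variables theorem for $C^1$-diffeomorphisms applies directly, yielding the claimed equality of integrals. Since both integrals define holomorphic functions on their respective half-planes of absolute convergence (by the general principle invoked immediately after~\eqref{infzeta}, cited from Theorem~\ref{an2}), the identity then propagates by analytic continuation wherever either side makes sense.

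The only delicate point, rather than a genuine obstacle, is keeping the bookkeeping of which variable plays which role; it is easier to substitute $x = \Phi(y)$ than $y = \Phi(x)$ because the target domain in $y$ is then already exactly the one appearing in $\zeta_{\mathbf 0,\Phi(\O)}(s;1/T)$, avoiding any further rearrangement.
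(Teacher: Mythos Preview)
Your proof is correct and follows essentially the same approach as the paper: a direct application of the change-of-variables formula under the geometric inversion $\Phi$, using Lemma~\ref{jacobian} for the Jacobian and the relation $|\Phi(x)|=1/|x|$. The paper's argument is simply a terser version of yours, omitting your additional remarks about the region of absolute convergence and analytic continuation.
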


\begin{proof}
Defining $y=\Phi^{-1}(x)$ and using Lemma~\ref{jacobian} this is a consequence of the change of variables formula once we observe the fact that $|y|=1/|x|$:
$$
\begin{aligned}
\zeta_{\ty,\O}(s;T)&=\int_{B_T(0)^c\cap\O}|x|^{-s-N}\di x=\int_{\Phi(B_T(0)^c\cap\O)}|y|^{s+N}|y|^{-2N}\di y\\
&=\int_{B_{1/T}(0)\cap\Phi(\O)}|y|^{s-N}\di y=\zeta_{\{\mathbf{0}\},\Phi(\O)}(s;1/T).
\end{aligned}
$$
\end{proof}

This result suggests that we can analyze fractal properties of $\O\subseteq\eR^N$ at infinity by analyzing the fractal properties of the `inverted' relative fractal drum $(\mathbf{0},\Phi(\O))$.
A similar approach (in the context of unbounded subsets of $\eR^N$) was made in~\cite{razuzup}.
Of course, in that approach, we can use results of~\cite{fzf} about relative fractal drums and relative distance (and tube) zeta functions.
On the other hand, we stress that in that case we are dealing with the usual relative box dimension of the inverted relative fractal drum, i.e., with $\dim_B(\mathbf{0},\Phi(\O))$ which is defined via the $r$-dimensional relative Minkowski content, namely, $\mathcal{M}^r(\mathbf{0},\Phi(\O))$.
However, it is not evident what are the relations between the ``classical'' relative box dimension (and Minkowski content) of the inverted relative fractal drum with the notions of box dimension and Minkowski content at infinity introduced in Section~\ref{inf_box_def}.
We will give an answer to this question in a future work as well as to the natural question about the effect of the one-point compactification on the fractal properties of unbounded sets at infinity as well as how to analyze fractal properties of unbounded sets of infinite measure at infinity.
(See also \cite{ra}.)

To prove the holomorphicity theorem, we will need the following proposition which complements~\cite[Lemma~3]{singl}.

\begin{proposition}\label{integralna_veza}
Let $\O\subseteq\eR^N$ be a Lebesgue measurable set with $|\O|<\ty$, $T>0$ and let $u\colon(T,+\ty)\to[0,+\ty)$ be a strictly monotone $C^1$ function. Then the following equality holds
\begin{equation}\label{int_vez}
\int_{{B_T(0)^c\cap\O}}u(|x|)\di x=u(T)|{B_T(0)^c\cap\O}|+\int_T^{+\ty}|B_t(0)^c\cap\O|u'(t)\di t.
\end{equation}
\end{proposition}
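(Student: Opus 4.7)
The plan is to reduce the identity to an application of the fundamental theorem of calculus together with Fubini--Tonelli. Since $u\in C^1(T,+\infty)$, for any $T<s<|x|$ we have $u(|x|)-u(s)=\int_s^{|x|}u'(\tau)\di\tau$. Because $u$ is monotone, the limit $u(T^+):=\lim_{s\to T^+}u(s)$ exists in $[0,+\infty]$; letting $s\to T^+$ (using monotone convergence on the right, which is legitimate because $u'$ has constant sign as $u$ is strictly monotone) yields the pointwise identity
\begin{equation}\label{ftc-step}
u(|x|)=u(T)+\int_T^{|x|}u'(t)\di t\qquad\text{for every }x\text{ with }|x|>T,
\end{equation}
where $u(T)$ is understood as $u(T^+)$. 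This is exactly where the hypothesis that $u$ is strictly monotone gets used.

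Next I would integrate \eqref{ftc-step} over $B_T(0)^c\cap\O$. The first term integrates trivially to $u(T)|B_T(0)^c\cap\O|$, matching the first summand on the right-hand side of \eqref{int_vez}. For the second term, I would rewrite the double integral $\int_{B_T(0)^c\cap\O}\int_T^{|x|}u'(t)\di t\di x$ and swap the order of integration. The key combinatorial observation is that, for $t>T$, the condition $T<t<|x|$ combined with $x\in B_T(0)^c\cap\O$ is equivalent to $x\in B_t(0)^c\cap\O$ (since $B_t(0)^c\subseteq B_T(0)^c$). Hence, after exchanging the integrals, the inner integral becomes $|B_t(0)^c\cap\O|$, producing the desired term $\int_T^{+\infty}|B_t(0)^c\cap\O|u'(t)\di t$.

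The interchange of integrals is justified by Fubini--Tonelli: since $u$ is strictly monotone, $u'$ does not change sign, so the integrand $u'(t)\mathbf{1}_{\{T<t<|x|\}}$ is either everywhere $\geq 0$ or everywhere $\leq 0$ on the product space, and Tonelli applies to $|u'|$. In the increasing case both sides of \eqref{int_vez} are nonnegative and the identity holds in $[0,+\infty]$; in the decreasing case $u(|x|)\leq u(T)$, so $\int_{B_T(0)^c\cap\O}u(|x|)\di x\leq u(T)|\O|<\infty$, and all terms are finite, so the Fubini rearrangement is valid without sign issues.

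The only delicate point I anticipate is the endpoint behaviour at $t=T$, in particular the possibility that $u(T^+)=+\infty$. In that subcase the left-hand side of \eqref{int_vez} is already $+\infty$ (as $u(|x|)\to+\infty$ as $|x|\to T^+$ on a set of positive measure, unless $|B_T(0)^c\cap\O|=0$, a trivial situation), and the right-hand side is also $+\infty$ since the first summand is $+\infty$; so the identity persists. Apart from this bookkeeping, the proof is a clean FTC + Fubini argument, so I do not expect a genuinely difficult step.
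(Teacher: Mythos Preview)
Your proof is correct and takes a genuinely different route from the paper's. The paper establishes the identity via the layer-cake representation $\int_X f\,dx=\int_0^\infty |\{x\in X: f(x)\ge t\}|\,dt$ applied to $f(x)=u(|x|)$ on $X=B_T(0)^c\cap\Omega$. It then computes the super-level sets $\{u(|x|)\ge t\}$ explicitly, treating the strictly decreasing and strictly increasing cases separately (the sets turn out to be annular pieces of $\Omega$), and finishes with the substitution $s=u^{-1}(t)$. Your argument instead writes $u(|x|)=u(T)+\int_T^{|x|}u'(t)\,dt$ and swaps the order of integration, using the constant sign of $u'$ to invoke Tonelli. Both approaches rely on monotonicity, but yours avoids the case-by-case description of level sets and the appearance of $u^{-1}$; it is shorter and more direct. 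The paper's approach, by contrast, makes the geometric structure of the super-level sets explicit, which has some expository value.

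One small correction: your treatment of the boundary case $u(T^+)=+\infty$ is not quite right. The fact that $u(|x|)\to+\infty$ as $|x|\to T^+$ does not by itself force $\int_{B_T(0)^c\cap\Omega}u(|x|)\,dx=+\infty$ (consider $u(r)=(r-T)^{-1/2}$ in one dimension), and in that regime the right-hand side of the identity becomes an indeterminate $\infty-\infty$. This is not a genuine gap, however: the statement writes $u(T)$ explicitly in the formula, and the paper's own proof tacitly assumes $u(T^+)<\infty$ as well (it uses $u(T)=\max u$ in the decreasing case without comment), so this edge case lies outside the intended scope.
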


\begin{proof}
We will use a well-known fact (see, e.g., \cite[Theorem 1.15]{Mat}) that for a non-negative Borel function $f$ on a separable metric space $X$ the following identity holds

\begin{equation}\label{slojevita}
\int_Xf(x)\di x=\int_0^{\infty}|\{x\in X\,:\, f(x)\geq t\}|\di t.
\end{equation}

We let $f(x):=u(|x|)$, $X:=B_T(0)^c\cap{{\O}}$ and consider separately the cases of strictly decreasing and strictly increasing function $u$.

$(a)$ Let $u$ be strictly decreasing and $u(+\infty):=\lim_{\tau\to+\ty}u(\tau)$. For the set appearing on the right-hand side of \eqref{slojevita} we have
$$
A(t):=\{x\in {B_T(0)^c\cap\O}\,:\, u(|x|)\geq t\}=\{x\in {B_T(0)^c\cap\O}\,:\, |x|\leq u^{-1}(t)\}.
$$
For $0\leq t\leq u(+\ty)$ it is true that $u(|x|)\geq t$ for any $x\in\eR^N$ because $u(+\ty)=\min_{\tau\geq 0} u(\tau)$ and we have $A(t)={B_T(0)^c\cap\O}$.
Furthermore, if $u(+\ty)<t\leq u(T)$, it is clear that
$$
A(t)=({B_T(0)^c\cap\O})\setminus (B_{u^{-1}(t)}(0)^c\cap\O)=B_{T,u^{-1}(t)}(0)\cap\O.
$$
Finally, for $t> u(T)$ we have that $A(t)=\emptyset$ because $u(T)=\max_{\tau\geq 0} u(\tau)$ and using \eqref{slojevita} we get
$$
\begin{aligned}
\int_{{B_T(0)^c\cap\O}}u(|x|)\di x&=\int_0^{u(+\ty)}|{B_T(0)^c\cap\O}|\di t+\int_{u(+\ty)}^{u(T)}|B_{T,u^{-1}(t)}(0)\cap\O|\di t\\
&=u(+\ty)|{B_T(0)^c\cap\O}|+\int_{u(+\ty)}^{u(T)}|B_{T}(0)^c\cap\O|\di t\\
&\phantom{=}-\int_{u(+\ty)}^{u(T)}|B_{u^{-1}(t)}(0)^c\cap\O|\di t\\
&=u(T)|{B_T(0)^c\cap\O}|+\int_{T}^{+\ty}|B_{s}(0)^c\cap\O|u'(s)\di s,
\end{aligned}
$$
where we have introduced the new variable $s=u^{-1}(t)$ in the last equality.

$(b)$ Let now $u$ be a strictly increasing function and $u(+\ty):=\lim_{\tau\to+\ty}u(\tau)=\sup_{\tau\geq 0}u(\tau)\in(0,+\ty]$.
In this case we have
$$
A(t):=\{x\in {B_T(0)^c\cap\O}\,:\, u(|x|)\geq t\}=\{x\in {B_T(0)^c\cap\O}\,:\, |x|\geq u^{-1}(t)\}.
$$
For $0\leq t\leq u(T)$ we have that $u(|x|)\geq t$ for any $x\in\eR^N$ because $u(T)=\min_{\tau\geq 0} u(\tau)$ and we have $A(t)={B_T(0)^c\cap\O}$.
Furthermore, if $u(T)<t<u(+\ty)$ it is clear that $A(t)=B_{u^{-1}(t)}(0)^c\cap\O$, and for $t\geq u(+\ty)$ the set $A(t)$ is an empty set.
Altogether, we have
$$
\begin{aligned}
\int_{{B_T(0)^c\cap\O}}u(|x|)\di x&=\int_0^{u(T)}|{B_T(0)^c\cap\O}|\di t+\int_{u(T)}^{u(+\ty)}|B_{u^{-1}(t)}(0)^c\cap\O|\di t\\
&=u(T)|{B_T(0)^c\cap\O}|+\int_{T}^{+\ty}|B_{s}(0)^c\cap\O|u'(s)\di s
\end{aligned}
$$
where, again, we have introduced the new variable $s=u^{-1}(t)$ in the last equality.
This concludes the proof of the proposition.
\end{proof}

\begin{proposition}\label{cor_infint}
Let $\O\subseteq\eR^N$ be a measurable set with $|\O|<\ty$, $T>0$. Then for every $\sigma\in({\ovb{\dim}}_B(\ty,\O),+\ty)$, the following identity holds$:$
\begin{equation}\label{infzetaint}
\int_{_T\O}|x|^{-\s-N}\di x=T^{-\s-N}|_T\O|-(\s+N)\int_T^{+\ty}t^{-\s-N-1}|_t\O|\di t.
\end{equation}
Furthermore, the above integrals are finite for such $\sigma$. 
\end{proposition}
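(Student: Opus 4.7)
The plan is to reduce the identity \eqref{infzetaint} to the already established Proposition~\ref{integralna_veza} by specializing the weight function to $u(t):=t^{-\sigma-N}$. This $u$ is a non-negative $C^1$ function on $(T,+\infty)$, and its monotonicity type is controlled by the sign of $\sigma+N$: strictly decreasing when $\sigma>-N$, strictly increasing when $\sigma<-N$, and constant in the exceptional case $\sigma=-N$. In either of the two nontrivial cases one applies part $(a)$ or part $(b)$ of Proposition~\ref{integralna_veza}, respectively, and then substitutes $u'(t)=-(\sigma+N)t^{-\sigma-N-1}$ into \eqref{int_vez}; this substitution yields \eqref{infzetaint} immediately. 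When $\sigma=-N$ the identity collapses to the tautology $|_T\O|=|_T\O|$ and there is nothing to prove.

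The substantive step is to verify that both integrals in \eqref{infzetaint} are finite, which a posteriori also legitimizes the above application of Proposition~\ref{integralna_veza}. For this I would use the assumption $\sigma>\ovb{\dim}_B(\ty,\O)$ to pick an auxiliary exponent $r$ with $\ovb{\dim}_B(\ty,\O)<r<\sigma$. By the characterization of the upper box dimension at infinity as an infimum in \eqref{updiminf}, one has $\ovb{\M}^{r}(\ty,\O)=0$, and \eqref{upMinkinf} then furnishes a constant $C>0$ and a threshold $t_0\geq T$ such that $|_t\O|\leq C\,t^{N+r}$ for every $t\geq t_0$. On $[t_0,+\infty)$ the integrand $t^{-\sigma-N-1}|_t\O|$ is therefore dominated by $C\,t^{r-\sigma-1}$, which is integrable because $r-\sigma-1<-1$. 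On the bounded interval $[T,t_0]$ the elementary monotonicity $|_t\O|\leq |_T\O|\leq|\O|<\infty$ provides a uniform bound, so the integral converges there as well. Finiteness of the left-hand side of \eqref{infzetaint} then follows directly from the identity itself, once the two finite terms on the right are accounted for.

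I do not expect any genuine obstacle beyond pedantic casework according to whether $\sigma$ lies above or below $-N$, together with the standard splitting of the integration range at the threshold $t_0$ where the dimension bound on $|_t\O|$ becomes available; both are matters of bookkeeping rather than new analytic input. The core of the argument is simply that Proposition~\ref{integralna_veza} is exactly tailored to convert a distance integral over $_T\O$ into a weighted integral of the tube function $t\mapsto|_t\O|$, and the hypothesis $\sigma>\ovb{\dim}_B(\ty,\O)$ is precisely what is needed to ensure convergence of the latter.
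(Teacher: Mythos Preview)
Your proposal is correct and follows essentially the same approach as the paper: specialize Proposition~\ref{integralna_veza} to $u(t)=t^{-\sigma-N}$, treat the case $\sigma=-N$ as trivial, and verify finiteness of the right-hand integral by choosing an auxiliary exponent between $\ovb{\dim}_B(\ty,\O)$ and $\sigma$ to bound $|_t\O|$. Your splitting of the integration range at the threshold $t_0$ is in fact slightly more careful than the paper's version, which tacitly assumes $T$ is already large enough for the tube-function bound to hold on all of $[T,+\infty)$.
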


\begin{proof}
The proposition is a direct consequence of Proposition~\ref{integralna_veza} with $u(t):=t^{-\s-N}$ when $\s\neq -N$ and for $\s=-N$ the equation~\eqref{infzetaint} is trivially fulfilled.
Namely, let us fix $\sigma_1\in({\ovb{\dim}}_B(\ty,\O),\sigma)$. Then for $T$ large enough we have that for a constant $M>0$ we have
$
|B_t(0)^c\cap\O|\leq Mt^{\sigma_1+N}
$
for every $t>T$.
From this we get that
$$
\int_T^{+\ty}t^{-\s-N-1}|B_t(0)^c\cap\O|\di t\leq M\int_T^{+\ty}t^{-\s-N-1}t^{\sigma_1+N}\di t=M\int_T^{+\ty}t^{\s_1-\s-1}\di t
$$
and the last integral above is finite because $\s_1-\s-1<-1$.
\end{proof}

In order to prove the holomorphicity theorem we will need the following theorem which we cite from \cite{fzf} along with its proof for the sake of exposition. 

\begin{theorem}[Cited from {\cite[Theorem 2.1.44]{fzf}}]\label{an2}
Let $(E,{\mathcal B}(E),\mu)$ be a measure space, where $E$ is a locally compact metrizable space, ${\mathcal B}(E)$ is the Borel $\s$-algebra of $E$, and $\mu$ is a positive or complex $($local$)$ measure, with total variation $($local$)$ measure denoted by $|\mu|$. Furthermore, let $\f:E\to(0,+\ty)$ be a measurable function. Then$:$
\bigskip

$(a)$ If $\f$ is essentially bounded $($that is, if there exists $C>0$ such that $\f(t)\le C$ for $|\mu|$-a.e.\ $t\in E)$,
and if there exists $\s\in\eR$ such that $\int_E\f(t)^\s \D|\mu|(t)$ $<\ty$, then 
\begin{equation}\label{Ffi}
F(s):=\int_E\f(t)^s\D\mu(t)
\end{equation}
is holomorphic on the right half-plane $\{\re s>\s\}$, and
$F'(s)=\int_E\f(t)^s\log\f(t)\,\D\mu(t)$ in that region.

$(b)$ If there exists $C>0$ such that $\f(t)\ge C$ for $|\mu|$-a.e.\ $t\in E$, and if there exists $\s\in \eR$ such that $\int_E\f(t)^{-\s} \D|\mu|(t)<\ty$,
then 
\begin{equation}
G(s):=\int_E\f(t)^{-s}\D\mu(t)
\end{equation}
is holomorphic on $\{\re s>\s\}$, and $G'(s)=-\int_E\f(t)^{-s}\log\f(t)\,\D\mu(t)$ in that region. 

$(c)$ Finally, if there exist positive constants $C_1$ and $C_2$ such that $C_1\le\f(t)\le C_2$ for $|\mu|$-a.e.\ $t\in E$, and there exists $\s\in\eR$ such that $\int_E\f(t)^{\s} \D|\mu|(t)<\ty$, then the Dirichlet-type integrals $F$ and $G$ in $(a)$ and $(b)$, respectively, are entire functions.
\end{theorem}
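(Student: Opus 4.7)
The plan is to prove $(a)$ directly, to reduce $(b)$ to $(a)$ by replacing $\f$ with $1/\f$, and then to observe that $(c)$ follows at once, since under the two-sided bound the hypotheses of both $(a)$ and $(b)$ hold for every $\s\in\eR$.

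For part $(a)$ I would first establish absolute convergence of $F(s)$ on $\{\re s>\s\}$. If $\re s=\s'>\s$, the bound $\f\le C$ yields $|\f(t)^s|=\f(t)^{\s'}\le\max(1,C^{\s'-\s})\,\f(t)^\s$, which is $|\mu|$-integrable by hypothesis. To prove holomorphicity I would apply Morera's theorem: for any closed triangle $\Delta\subset\{\re s>\s\}$, the same estimate applied with $\s'$ strictly between $\s$ and the minimum of $\re s$ on $\partial\Delta$ makes $(s,t)\mapsto\f(t)^s$ jointly integrable on $\partial\Delta\times E$ with respect to the product of arc length and $|\mu|$. Fubini's theorem then gives
\[
\oint_{\partial\Delta}F(s)\,\D s=\int_E\Bigl(\oint_{\partial\Delta}\f(t)^s\,\D s\Bigr)\D\mu(t)=0,
\]
since the inner contour integral vanishes for each fixed $t$ because $s\mapsto\f(t)^s=\E^{s\log\f(t)}$ is entire; by Morera, $F$ is holomorphic on $\{\re s>\s\}$. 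The formula for $F'$ is then obtained by differentiation under the integral sign.

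Part $(b)$ is reduced to $(a)$ by setting $\psi:=1/\f$, which satisfies $\psi\le 1/C$, and observing that $G(s)=\int_E\psi(t)^s\,\D\mu(t)$ with the same $\s$. For part $(c)$, the two-sided bound $C_1\le\f\le C_2$ implies that $\f^\s$ is bounded between positive constants for every $\s\in\eR$, so the single hypothesis $\int_E\f^\s\,\D|\mu|<\ty$ forces $|\mu|(E)<\ty$; consequently the assumptions of $(a)$ and of $(b)$ are satisfied for every $\s\in\eR$, and both $F$ and $G$ are entire. The main technical obstacle is the justification of the differentiation under the integral sign needed for the formula $F'(s)=\int_E\f^s\log\f\,\D\mu$, since the pointwise derivative contains the factor $\log\f$ which is unbounded as $\f\to 0^+$; this is handled by the elementary estimate $u^\e|\log u|\le M_\e$ valid for $u\in(0,C]$, combined with choosing $\e>0$ small enough that $\re s_0-\e>\s$, which yields an integrable dominant for the difference quotients via the absolute convergence argument used above.
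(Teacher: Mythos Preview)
Your proof is correct and relies on the same core estimate as the paper, namely $\f(t)^{\s'}\le C^{\s'-\s}\f(t)^\s$ for $\s'>\s$, together with the reduction of $(b)$ to $(a)$ via $\psi=1/\f$. The difference is methodological: the paper invokes a black-box result on holomorphic parameter integrals (Theorem~B.4 of \cite{carlson}; see also \cite{Mattn}) which simultaneously delivers holomorphicity and the derivative formula once a locally uniform integrable dominant $g_K(t)=C_K\f(t)^\s$ is exhibited, whereas you argue from scratch via Morera and Fubini for holomorphicity and then separately justify differentiation under the integral sign using the bound $u^\e|\log u|\le M_\e$. For part $(c)$ the paper simply writes down the two-sided estimate $|\f(t)^s|\le\max\{C_1^{\re s-\s},C_2^{\re s-\s}\}\f(t)^\s$ valid for all $s\in\Ce$, while you first extract $|\mu|(E)<\ty$ and then apply $(a)$ and $(b)$ for arbitrary $\s$; both lead to the same conclusion. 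Your route is more self-contained and makes the analytic machinery explicit; the paper's is shorter but depends on the cited reference.
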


\begin{proof} We use \cite[Theorem B.4, page 295]{carlson} (see also \cite{Mattn}). In our case, $f(s,t):=\f(t)^s$, $Z:=\{\re s>\s\}$. Note that for any $\s_1>\s$, 
we have $\f(t)^{\s_1}\le\|\f\|_{\ty}^{\s_1-\s}\f(t)^\s$, so that $\f^\s\in L^1(|\mu|)$ implies that $\f^{\s_1}\in L^1(|\mu|)$. In particular, 
since $|f(s,t)|=\f(t)^{\re s}$, it follows that
$f(s,t)=\f(t)^s\in L^1(|\mu|)$ for all $s\in\Ce$ such that $\re s>\s$.

Let $K$ be a compact subset of $Z=\{\re s>\s\}$.
Since 
\begin{equation}\label{f(s,t)}
|f(s,t)|=\f(t)^{\re s}\le\|\f\|_\ty^{\re s-\s}\f(t)^\s,
\end{equation} 
 we have that 
$|f(s,t)|\le g_K(t):=C_K\f(t)^\s$
for all $s\in K$ and $|\mu|$-a.e.\ $t\in E$, where $C_K=\max_{s\in K}\|\f\|_\ty^{\re s-\s}$. 
This proves part $(a)$ of the theorem. 

Part $(b)$ follows from part $(a)$ applied to $\f(t)^{-1}$. 

Finally, part~$(c)$ follows similarly as in $(a)$, by noting that
\begin{equation}
|f(s,t)|=\f(t)^{\re s}\le
\max\{C_1^{\re s-\s},C_2^{\re s-\s}\}\f(t)^\s,
\end{equation} 
for every complex number $s$.
%Finally, case~$(c)$ follows from $(a)$ and $(b)$, by noting that $F(s)=G(-s)$.
\end{proof}

Now we can state and prove the holomorphicity theorem for the Lapidus zeta function at infinity, but firstly we will introduce a new notation for the sake of brevity, namely,
\begin{equation}\label{kraticazasloj}
{_{a,b}\O}:=B_{a,b}(0)\cap\O.
\end{equation}

\begin{theorem}\label{analiticinf}
Let $\O$ be any Lebesgue measurable subset of $\eR^N$ of finite $N$-dimensional Lebesgue measure.
Assume that $T$ is a fixed positive number.
Then the following conclusions hold.

\noindent $(a)$ The abscissa of convergence of the Lapidus zeta function at infinity
\begin{equation}\label{inteqzeta}
\zeta_{\ty,\O}(s)=\int_{{B_T(0)^c\cap\O}}|x|^{-s-N}\di x
\end{equation}
is equal to the upper box dimension of $\O$ at infinity, i.e.,
\begin{equation}
D(\zeta_{\ty,\O})=\ovb{\dim}_B(\ty,\O).
\end{equation}
Consequently, $\zeta_{\ty,\O}$ is holomorphic on the half-plane $\{\re s>{\ovb{\dim}}_B(\ty,\O)\}$ and for every complex number $s$ in that half-plane we have that
\begin{equation}\label{zetainfder}
\zeta'_{\ty}(s,\O)=-\int_{{B_T(0)^c\cap\O}}|x|^{-s-N}\log|x|\di x.
\end{equation}

%\noindent $(b)$ The half-plane from $(a)$ is optimal.\footnote{Optimal in the sense that the integral appearing in~\eqref{inteqzeta} is diveregent for $s\in(-\ty,\ovb{D})$.}

\noindent $(b)$ If $D=\dim_B(\ty,\O)$ exists and ${\unb{\M}}^{D}(\ty,\O)>0$, then $\zeta_{\ty,\O}(s)\to+\ty$ for $s\in\eR$ as $s\to D^+$.
\end{theorem}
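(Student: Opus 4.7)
Part $(a)$ amounts to establishing the equality $D(\zeta_{\ty,\O}) = \ovb{\dim}_B(\ty,\O)$ and then deducing the holomorphicity statement from Theorem~\ref{an2}. The upper bound $D(\zeta_{\ty,\O}) \leq \ovb{\dim}_B(\ty,\O)$ is essentially immediate from Proposition~\ref{cor_infint}: the identity~\eqref{infzetaint} together with its accompanying finiteness conclusion gives $\zeta_{\ty,\O}(\sigma) < +\infty$ for every real $\sigma > \ovb{\dim}_B(\ty,\O)$. For the holomorphicity and derivative formula~\eqref{zetainfder} on the half-plane $\{\re s > \ovb{\dim}_B(\ty,\O)\}$ I will invoke Theorem~\ref{an2}$(b)$ with $E := B_T(0)^c \cap \O$, $\varphi(x) := |x|$ (bounded below by $T > 0$ on $E$) and the positive measure $\D\mu(x) := |x|^{-N}\di x$; the Dirichlet integral $G(s) = \int_E \varphi(x)^{-s}\D\mu(x)$ then coincides with $\zeta_{\ty,\O}(s;T)$.

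For the reverse inequality $D(\zeta_{\ty,\O}) \geq \ovb{\dim}_B(\ty,\O)$, fix $\sigma < \ovb{\dim}_B(\ty,\O)$ and choose an intermediate $\alpha \in (\sigma, \ovb{\dim}_B(\ty,\O))$. Since $\ovb{\M}^{\alpha}(\ty,\O) = +\infty$ by the definition of upper box dimension at infinity, one can extract a sequence $t_n \nearrow +\infty$ with $t_{n+1} > 2 t_n$ and $|{_{t_n}\O}|/t_n^{N+\alpha} \to +\infty$. Monotonicity of $t \mapsto |{_t\O}|$ yields $|{_t\O}| \geq |{_{t_n}\O}|$ on $[t_n/2, t_n]$, and a direct estimate then shows
\[
\int_{t_n/2}^{t_n} t^{-\sigma-N-1} |{_t\O}| \di t \; \geq \; C_\sigma \, t_n^{\alpha-\sigma}\, \frac{|{_{t_n}\O}|}{t_n^{N+\alpha}} \longrightarrow +\infty,
\]
for a positive constant $C_\sigma$. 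As the intervals $[t_n/2, t_n]$ are disjoint, $\int_T^{+\infty} t^{-\sigma-N-1}|{_t\O}|\di t = +\infty$. By Proposition~\ref{<=-N}, $\sigma < \ovb{\dim}_B(\ty,\O) \leq -N$, so $u(t) := t^{-\sigma-N}$ is strictly increasing on $(T,+\infty)$; Proposition~\ref{integralna_veza} applied to $u$ then forces $\zeta_{\ty,\O}(\sigma) = +\infty$.

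For part $(b)$, the key preliminary observation is that $\unb{\M}^D(\ty,\O) > 0$ rules out $D = -N$: indeed, $|\O| < +\infty$ forces $|{_t\O}| \to 0$ as $t \to +\infty$, so $\unb{\M}^{-N}(\ty,\O) = \liminf_{t\to+\infty} |{_t\O}| = 0$. Hence $D < -N$, which is precisely the regime in which $\sigma + N < 0$ for $\sigma$ close to $D$. From $\unb{\M}^D(\ty,\O) > 0$ pick $c > 0$ and $T_0 \geq T$ with $|{_t\O}| \geq c\, t^{N+D}$ for all $t \geq T_0$. Applying Proposition~\ref{cor_infint} with $T$ replaced by $T_0$, both the boundary term $T_0^{-\sigma-N}|{_{T_0}\O}|$ and the factor $-(\sigma+N)$ are nonnegative, so
\[
\zeta_{\ty,\O}(\sigma;T_0) \; \geq \; -(\sigma+N) \int_{T_0}^{+\infty} t^{-\sigma-N-1} |{_t\O}| \di t \; \geq \; \frac{-(\sigma+N)\, c\, T_0^{D-\sigma}}{\sigma-D}.
\]
Letting $\sigma \to D^+$ sends the right-hand side to $+\infty$ because $-(\sigma+N) \to -(D+N) > 0$, $T_0^{D-\sigma} \to 1$, and $(\sigma-D)^{-1} \to +\infty$; since $\zeta_{\ty,\O}(s;T) - \zeta_{\ty,\O}(s;T_0)$ is entire and hence bounded on any compact real subset, $\zeta_{\ty,\O}(\sigma;T) \to +\infty$ as well.

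The main technical subtlety throughout is the careful bookkeeping of the sign of $\sigma + N$, which rests on the universal bound $\ovb{\dim}_B(\ty,\O) \leq -N$ from Proposition~\ref{<=-N}. The fact that $u(t) = t^{-\sigma-N}$ is \emph{increasing}, not decreasing, in the relevant range is precisely what turns Proposition~\ref{integralna_veza} into a divergence producer in $(a)$ and into a positive lower-bound tool in $(b)$; the strict inequality $D < -N$ needed for $(b)$ is the other essential structural observation.
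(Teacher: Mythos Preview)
Your proof is correct, and for part $(b)$ it is essentially identical to the paper's. For part $(a)$, however, the mechanics differ in two places, and the comparison is worth noting.

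\emph{Upper bound $D(\zeta_{\ty,\O}) \leq \ovb{D}$.} You simply cite Proposition~\ref{cor_infint}, which already asserts that $\zeta_{\ty,\O}(\sigma)$ is finite for every real $\sigma>\ovb{D}$. This is cleaner than the paper, which instead redoes the estimate directly via a dyadic annular decomposition $_{T^k,T^{k+1}}\O$ and a geometric series bound; your shortcut is legitimate because the finiteness statement in Proposition~\ref{cor_infint} is precisely what is needed to feed into Theorem~\ref{an2}$(b)$.

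\emph{Lower bound $D(\zeta_{\ty,\O}) \geq \ovb{D}$.} Here the two arguments diverge more substantially. You show that the \emph{integral term} in the identity \eqref{int_vez} diverges, by estimating it from below on disjoint dyadic blocks $[t_n/2,t_n]$ built from a sequence witnessing $\ovb{\M}^{\alpha}(\ty,\O)=+\infty$. The paper instead drops the (nonnegative) integral term altogether and retains only the \emph{boundary term}:
\[
I_T \;=\; \int_{_T\O}|x|^{-s-N}\di x \;\geq\; T^{-s-N}|_T\O|,
\]
then exploits monotonicity of $T\mapsto I_T$ along the same witnessing sequence $(t_k)$ to obtain $I_T\geq I_{t_k}\geq C_k\,t_k^{\sigma-s}\to+\infty$. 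The paper's route is a little shorter and avoids the need to arrange disjointness of the blocks; your route, on the other hand, makes explicit that it is the measure of the tail $|_t\O|$ integrated against $t^{-\sigma-N-1}$ that blows up, which is closer in spirit to how one later analyzes residues. Both arguments hinge on the same structural fact (that $\sigma<\ovb{D}\leq -N$ forces $-(\sigma+N)>0$), which you correctly identify.
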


\begin{proof}

$(a)$ If we let $\ovb{D}:={\ovb{\dim}}_B(\ty,\O)$, then from the definitions of the upper Minkowski content and of the upper box dimension at infinity we deduce that
$
\limsup_{t\to+\ty}\frac{|B_t(0)^c\cap\O|}{t^{N+\sigma}}=0
$
for every $\sigma>\ovb{D}$.
Now, let us fix $\sigma_1$ such that $\ovb{D}<\sigma_1<\sigma$ and take $T>1$ large enough, such that for a constant $M>0$ it holds that
$
|B_t(0)^c\cap\O|\leq Mt^{\sigma_1+N}$ for every $t>T.
$
Furthermore, we estimate $\zeta_{\ty,\O}(\sigma)$ in the following way
$$
\begin{aligned}
\zeta_{\ty,\O}(\sigma)&=\int_{B_T(0)^c\cap\O}|x|^{-\sigma-N}\di x=\sum_{k=1}^{\ty}\int_{_{T^k,T^{k+1}}\O}|x|^{-\sigma-N}\di x\\
&\leq\sum_{k=1}^{\ty}\max\left\{(T^k)^{-\sigma-N},(T^{k+1})^{-\sigma-N}\right\}|_{T^k,T^{k+1}}\O|\\
&\leq\max\left\{1,T^{-\sigma-N}\right\}\sum_{k=1}^{\ty}(T^k)^{-\sigma-N}M(T^k)^{\sigma_1+N}\\
&=M\max\left\{1,T^{-\sigma-N}\right\}\sum_{k=1}^{\ty}(T^{\sigma_1-\sigma})^k<\infty.
\end{aligned}
$$
The last inequality follows from the fact that $T>1$ and $\sigma_1-\sigma<0$.
We let now $E:={B_T(0)^c\cap\O}$, $\varphi(x):=|x|$ and $\di\mu(x):=|x|^{-N}\di x$ and note that $\varphi(x)\geq T>1$ for $x\in E$.
Part $(a)$ follows now directly from Theorem \ref{an2}$(b)$.

To conclude the proof that $\ovb{D}$ is the abscissa of convergence of $\zeta_{\ty,\O}$ we take $s\in(-\ty,\ovb{D})$ and use  Proposition~\ref{cor_infint}:
\begin{equation}
\begin{aligned}
I_T:=\int_{_T\O}|x|^{-s-N}\di x&=\frac{|_T\O|}{T^{s+N}}-(s+N)\int_{T}^{+\ty}t^{-s-N-1}|_t\O|\di t\geq \frac{|_T\O|}{T^{s+N}}.
\end{aligned}
\end{equation}
Now, we fix $\sigma$ such that $s<\sigma<\ovb{D}$.
From ${\ovb{\M}}^{\sigma}(\ty,\O)=+\ty$ we conclude that there exists a sequence $(t_k)_{k\geq 1}$ such that
$
C_k:=\frac{|{_{t_k}\O}|}{t_k^{N+\sigma}}\to+\ty$ when $t_k\to+\ty.
$
It is clear that the function $T\to I_T$ is nonincreasing and we have
\begin{equation}
\begin{aligned}
I_T\geq I_{t_k}\geq t_{k}^{-s-N}|{_{t_k}\O}|=t_k^{-s-N}t_k^{N+\sigma}C_k=C_kt_k^{\sigma-s}\to+\ty.
\end{aligned}
\end{equation}
Therefore, $I_T=+\ty$ for every $s<\ovb{D}$ which proves that $D(\zeta_{\ty,\O})=\ovb{D}$.

$(b)$ Let us assume now that $D=\dim_B(\ty,\O)$ exists, and ${\unb{\M}}^{D}(\ty,\O)>0$.
From Proposition~\ref{<=-N} we have that $D\leq -N$.
On the other hand, the condition ${\unb{\M}}^{D}(\ty,\O)>0$ and Remark~\ref{=-N} imply that $D\neq -N$.
Consequently, we may assume that $D<-N$. 
Furthermore, ${\unb{\M}}^{D}(\ty,\O)>0$ implies that there exists a constant $C>0$ such that for a sufficiently large $T$ we have that $|_t\O|\geq Ct^{N+D}$ for every $t>T$.
Hence, for $D<s<-N$ we have the following:
\begin{equation}
\begin{aligned}
\zeta_{\ty,\O}(s)&=\int_{B_T(0)^c\cap\O}|x|^{-s-N}\di x=T^{-s-N}|_T\O|-(s+N)\int_T^{+\ty}t^{-s-N-1}|_t\O|\di t\\
&\geq -(s+N)\int_T^{+\ty}t^{-s-N-1}|_t\O|\di t\geq -(s+N)C\int_T^{+\ty}t^{-s-N-1+N+D}\di t\\
&=-(s+N)C\frac{T^{D-s}}{s-D}\to+\ty,
\end{aligned} 
\end{equation}
when $s\to D^+$, and this proves part $(b)$.
\end{proof}

\begin{remark}
In the special case when ${\ovb{\dim}}_B(\ty,\O)=-N$ we have from the definition of the upper Minkowski content at infinity that
$
{\ovb{\M}}^{-N}(\ty,\O)=0$ and $\zeta_{\ty,\O}(-N)=|_T\O|.
$
This shows that the condition ${\unb{\M}}^{D}(\ty,\O)>0$ from part $(b)$ of Theorem~\ref{analiticinf} cannot be omitted in the general case.
\end{remark}

\begin{remark}
Similarly as in the case of standard relative fractal drums (see~\cite{fzf}), it is easy to see that Theorem~\ref{analiticinf} is still true if we replace the norm appearing in the definition of the distance zeta function at infinity with any other norm on $\eR^N$.
\end{remark}

Let us now revisit Propositions~\ref{1dimexample} and~\ref{standardniprim} from the previous section and compute the corresponding distance zeta functions at infinity.

\begin{proposition}\label{1dimexamplezeta}
Let $\O:=\O(\a,\b)$ be the set from Definition~\ref{Omega(a,b)}.
Then, for $T:=a_{j_0}$ large enough so that $_T\O$ is a countable union of disjoint intervals we have that
\begin{equation}\label{formula}
\zeta_{\ty,\O}(s;T)=\frac{1}{s}\sum_{j=j_0}^{\ty}(j^{-\a s}-(j^{\a}+j^{-\b})^{-s}).
\end{equation}
Furthermore, we have that
\begin{equation}
D(\zeta_{\ty,\O}(\,\cdot\,;T))=\frac{1-(\a+\b)}{\a}=\dim_{B}(\ty,\O)
\end{equation}
and $s=0$ is a removable singularity of $\zeta_{\ty,\O}(\,\cdot\,;T)$.
\end{proposition}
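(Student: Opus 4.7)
The plan is to derive \eqref{formula} by direct integration and then to deduce both remaining claims from Proposition~\ref{1dimexample} and Theorem~\ref{analiticinf}. First, I would choose $j_0$ large enough that the intervals $I_j=(j^\a,j^\a+j^{-\b})$, $j\geq j_0$, are pairwise disjoint; then $T=a_{j_0}=j_0^\a$ makes $_T\O$ coincide, up to a set of Lebesgue measure zero, with $\bigsqcup_{j\geq j_0}I_j$. For $\re s>D(\zeta_{\ty,\O})$, Theorem~\ref{analiticinf}$(a)$ ensures that $|x|^{-s-1}\in L^1(_T\O)$, so countable additivity gives
\begin{equation*}
\zeta_{\ty,\O}(s;T)=\sum_{j=j_0}^{\ty}\int_{a_j}^{b_j}x^{-s-1}\di x,
\end{equation*}
and for $s\neq 0$ the elementary antiderivative $-x^{-s}/s$, evaluated at $a_j=j^\a$ and $b_j=j^\a+j^{-\b}$, yields \eqref{formula}.

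For the abscissa of convergence I would simply combine two ingredients already in hand: Proposition~\ref{1dimexample} gives $\dim_B(\ty,\O(\a,\b))=(1-(\a+\b))/\a$, and Theorem~\ref{analiticinf}$(a)$ identifies $D(\zeta_{\ty,\O}(\,\cdot\,;T))$ with $\ovb{\dim}_B(\ty,\O)$. Since the box dimension exists, the upper box dimension agrees with it, and the asserted value follows.

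For removability at $s=0$, note that $\a>0$ and $\b>1$ force $D<-1$, so $s=0$ lies in the open half-plane $\{\re s>D\}$. At $s=0$ the defining integral reduces to $\int_{_T\O}|x|^{-1}\di x$, which is finite because $|x|\geq T>0$ on $_T\O$ and $|_T\O|\leq|\O|<\ty$; hence Theorem~\ref{analiticinf}$(a)$ applies and $\zeta_{\ty,\O}(\,\cdot\,;T)$ is holomorphic at $0$. The $1/s$ prefactor on the right-hand side of \eqref{formula} is therefore only an apparent pole: each summand $j^{-\a s}-(j^\a+j^{-\b})^{-s}$ has a simple zero at $s=0$ with limit $\log(1+j^{-\a-\b})$ (by L'Hospital), and uniform convergence of the resulting series on a small disc about $s=0$ can be checked via the Taylor estimate $(1+j^{-\a-\b})^{-s}=1-sj^{-\a-\b}+O(s^2 j^{-2(\a+\b)})$ together with $\a+\b>1$. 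The one mildly technical point in the whole argument is the exchange of sum and integral in the first step, which is painless because of the absolute integrability furnished by Theorem~\ref{analiticinf}$(a)$.
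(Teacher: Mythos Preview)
Your derivation of the formula is the same as the paper's: decompose $_T\O$ into the disjoint intervals $I_j$, integrate $x^{-s-1}$ termwise, and sum. For the abscissa of convergence, however, you take a different route. The paper re-derives $D(\zeta_{\ty,\O})=\frac{1-(\a+\b)}{\a}$ directly from the series: it estimates $|\zeta_{\ty,\O}(s;T)|\leq\sum_{j\geq j_0}\int_{a_j}^{b_j}x^{-\sigma-1}\di x$, applies the mean value theorem to get $\sum c_j^{-\sigma-1}(b_j-a_j)$ with $c_j\asymp j^{\a}$, and reads off that this converges precisely when $\sigma>\frac{1-(\a+\b)}{\a}$, thus obtaining an \emph{independent} computation that is ``in accord with'' Theorem~\ref{analiticinf}. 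You instead invoke Theorem~\ref{analiticinf}$(a)$ and Proposition~\ref{1dimexample} as a black box. Your argument is shorter and logically sufficient; the paper's buys a self-contained check on the general theory. On the removability of $s=0$, the paper gives no explicit argument at all (it is tacit from $D<-1$), whereas your treatment --- holomorphy at $0$ from Theorem~\ref{analiticinf}$(a)$ plus the observation that each summand vanishes to first order --- makes this point fully explicit and is a welcome addition.
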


\begin{proof}
For the distance zeta function of $\O$ at infinity we have:
$
\zeta_{\ty,\O}(s;T)=\int_{B_T(0)^c\cap\O}x^{-s-1}\di x=\sum_{j=j_0}^{\ty}\int_{a_j}^{b_j}x^{-s-1}\di x
$
from which follows~\eqref{formula} after integrating. 
By setting $\s:=\re s$ and using the mean value theorem for integrals, we estimate
$$
|\zeta_{\ty,\O}(s;T)|\leq\sum_{j=j_0}^{\ty}\int_{a_j}^{b_j}x^{-\s-1}\di x=\sum_{j=j_0}^{\ty}c_j^{-\s-1}(b_j-a_j)
$$
for some $c_j\in(a_j,b_j)$ so that $c_j\asymp j^{\a}$ as $j\to +\ty$ which, in turn, implies that
$
\sum_{j=j_0}^{\ty}c_j^{-\s-1}(b_j-a_j)\asymp \sum_{j=j_0}^{\ty}j^{-\a(\s+1)}j^{-\b}.
$
The right-hand side is convergent if and only if $\sigma>\frac{1-(\a+\b)}{\a}$ from which we conclude by using~\eqref{dim(a,b)} that $D(\zeta_{\ty,\O}(\,\cdot\,;T))=\frac{1-(\a+\b)}{\a}=\dim_B(\ty,\O)$, which is in accord with Theorem~\ref{analiticinf}. 

\end{proof}

\begin{proposition}
Let $\O:=\{(x,y)\in\eR^2\,:\,x>1,\ 0<y<x^{-\a}\}$ for $\a>1$. Then for the distance zeta function of $\O$ at infinity calculated using the $|\cdot|_{\ty}$ norm on $\eR^2$ we have
$$
\zeta_{\ty,\O}(s;1;|\cdot|_{\ty})=\frac{1}{s+\a+1}.
$$
It is meromorphic on $\Ce$ with a single simple pole at $s=-1-\a$.
In particular, $\dim_B(\ty,\O)=-1-\a$.
\end{proposition}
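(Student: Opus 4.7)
The proof is a direct computation that exploits the simplification afforded by working with the max-norm instead of the Euclidean one. Here is the plan.

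First, I identify the relevant region. With $T=1$ and $|\cdot|_{\infty}$ denoting the sup-norm on $\eR^2$, the ball $K_1(0)$ is the open square $(-1,1)^2$. Every point $(x,y)\in\O$ satisfies $x>1$, hence lies outside $K_1(0)$, so $K_1(0)^c\cap\O=\O$. Moreover, for $(x,y)\in\O$ we have $0<y<x^{-\a}<1<x$, and therefore
$$
|(x,y)|_{\ty}=\max\{x,y\}=x.
$$
This is the key observation that the $|\cdot|_{\ty}$-norm reduces the two-dimensional radial weight to a function of $x$ alone, sparing us the implicit change of variable \eqref{xxx} used in Proposition~\ref{standardniprim}.

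Next, I apply Fubini on the region $\O=\{(x,y):x>1,\ 0<y<x^{-\a}\}$:
$$
\zeta_{\ty,\O}(s;1;|\cdot|_{\ty})=\int_{\O}|(x,y)|_{\ty}^{-s-2}\di x\di y=\int_1^{+\ty}\!\!\int_0^{x^{-\a}}x^{-s-2}\di y\di x=\int_1^{+\ty}x^{-s-\a-2}\di x.
$$
For $\re s>-\a-1$, the last integral converges and evaluates to $1/(s+\a+1)$. This gives the displayed formula on the half-plane of absolute convergence, and by the principle of analytic continuation, the rational function $1/(s+\a+1)$ is the unique meromorphic extension of $\zeta_{\ty,\O}(\,\cdot\,;1;|\cdot|_{\ty})$ to all of $\Ce$, with a single simple pole at $s=-1-\a$.

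Finally, the abscissa of convergence of $\zeta_{\ty,\O}(\,\cdot\,;1;|\cdot|_{\ty})$ is clearly $-1-\a$. By Theorem~\ref{analiticinf}(a), suitably applied with the max-norm (as noted in the remark following that theorem, the holomorphicity theorem is insensitive to the choice of norm), we conclude that $\ovb{\dim}_B(\ty,\O)=-1-\a$, and since equality of the upper and lower box dimensions is already known from Proposition~\ref{standardniprim}, $\dim_B(\ty,\O)=-1-\a$. There is no genuine obstacle in this argument; the only thing to guard against is forgetting that on $\O$ the dominant coordinate is $x$, which is what collapses the computation.
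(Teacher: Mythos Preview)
Your proof is correct and follows essentially the same route as the paper's own argument: identify that $|(x,y)|_{\ty}=x$ on $\O$, compute the double integral by Fubini to obtain $1/(s+\a+1)$, and conclude via analytic continuation and the holomorphicity theorem. Your version is in fact more explicit than the paper's in justifying why $K_1(0)^c\cap\O=\O$ and why the sup-norm reduces to $x$.
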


\begin{proof}
Let us compute the distance zeta function of $\O$ at infinity:
$$
\zeta_{\ty,\O}(s;1;|\cdot|_{\ty})=\int_{_1\O}|(x,y)|_{\ty}^{-s-2}\di x\mathrm{d}y=\int_{1}^{+\ty}\di x\int_{0}^{x^{-\a}}x^{-s-2}\di y=\frac{1}{s+\a+1}.
$$
The last equation holds if and only if $\re s>-1-\a$.
From this and~\eqref{dimstandardniprim}, we conclude that $D(\zeta_{\ty,\O}(\,\cdot\,;|\cdot|_{\ty}))=-1-\a=\dim_B(\ty,\O)$ which is, of course, in accord with Theorem~\ref{analiticinf}.
Moreover, the distance zeta function $\zeta_{\ty,\O}(\,\cdot\,;|\cdot|_{\ty})$ of $\O$ at infinity can be meromorphically extended to the whole complex plane with a single simple pole at $s=D$.
\end{proof}

Revisiting Proposition~\ref{prim_D=-2} will show that the conditions of Theorem~\ref{analiticinf} cannot be relaxed.

\begin{proposition}\label{primD=-2_zeta}
Let $\O$ be as in Proposition~\ref{prim_D=-2}. %, i.e., a union of sets $\O_k$ for $k\geq 1$ $($contained in the horizontal strip $\{1/2\leq y\leq \log2\})$ that are vertically translated images of the sets
%$$
%\widetilde{\O}_k=\{(x,y)\in\eR^2\,:\,x>1,\ 0<y<\frac{2^{-k}}{k}x^{-\a_k}\}
%$$
%with $\a_k=1+1/k$.
Then for the corresponding Lapidus zeta function at infinity calculated via the $|\cdot|_{\ty}$-norm on $\eR^2$ we have
\begin{equation}
\zeta_{\ty,\O}(s;|\cdot|_{\ty})=\sum_{k=1}^{\ty}\frac{2^{-k}}{k(s+2+\frac{1}{k})}.
\end{equation}
Furthermore, we also have that
\begin{equation}\label{eq1}
D(\zeta_{\ty,\O}(\,\cdot\,;|\cdot|_{\ty}))=\dim_B(\ty,\O)=-2
\end{equation}
and
$
\zeta_{\ty,\O}(-2;|\cdot|_{\ty})=|\O|=1.
$
Moreover, $\zeta_{\ty,\O}(\,\cdot\,;|\cdot|_{\ty})$ is holomorphic on the set
\begin{equation}\label{eq2}
\Ce\setminus(\{-2\}\cup\{-2-1/k\,:\,k\geq 1\})
\end{equation}
and $s=-2$ is an accumulation point of its simple poles. Finally, for the residues of $\zeta_{\ty,\O}(\,\cdot\,;|\cdot|_{\ty})$ we have that
$
\res\left(\zeta_{\ty,\O}\left(\,\cdot\,;|\cdot|_{\ty}\right),-2-\frac{1}{k}\right)=\frac{2^{-k}}{k}
$
for every $k\geq 1$.
\end{proposition}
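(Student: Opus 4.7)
The plan is to compute $\zeta_{\ty,\O}(s;|\cdot|_\ty)$ by direct iterated integration and then read off every claimed property from the resulting series.

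Step 1 (reduction to a one-dimensional integral). Every point $(x,y)\in\O$ satisfies $x>1$, and $\O$ sits in a horizontal strip of finite height with $|y|$ strictly smaller than $x$ throughout $\O$; consequently $|(x,y)|_\ty=\max(x,|y|)=x$ on $\O$. Taking (say) $T=1$ gives $K_T(0)^c\cap\O=\O$. The sets $\O_k$ are pairwise disjoint by construction (they occupy disjoint horizontal strips), so Tonelli's theorem yields
\[
\zeta_{\ty,\O}(s;|\cdot|_\ty)=\sum_{k=1}^{\ty}\int_{\O_k}x^{-s-2}\di x\di y=\sum_{k=1}^{\ty}\int_{\widetilde{\O}_k}x^{-s-2}\di x\di y,
\]
the last equality since the integrand is independent of $y$ and $\O_k$ is a vertical translate of $\widetilde{\O}_k$. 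The $y$-integration over $\widetilde{\O}_k$ contributes a factor $(2^{-k}/k)\,x^{-\a_k}$, and the remaining $x$-integral is
\[
\int_1^{+\ty}\frac{2^{-k}}{k}\,x^{-s-2-\a_k}\di x=\frac{2^{-k}}{k(s+1+\a_k)}=\frac{2^{-k}}{k(s+2+1/k)},
\]
valid for $\re s>-2-1/k$. Summation over $k$ gives the claimed formula.

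Step 2 (meromorphic structure). Each summand is a rational function with a single simple pole at $s=-2-1/k$ and residue $2^{-k}/k$. Fix any compact $K\subset\Ce\setminus\bigl(\{-2\}\cup\{-2-1/k:k\geq 1\}\bigr)$ and set $\d:=\mathrm{dist}\bigl(K,\{-2\}\cup\{-2-1/k:k\geq 1\}\bigr)>0$. Then $|s+2+1/k|\geq\d$ for all $s\in K$ and $k\geq 1$, and the Weierstrass $M$-test with the summable majorant $\sum_{k}2^{-k}/(k\d)$ delivers uniform convergence on $K$. Hence the series is holomorphic on the asserted open set; isolating the $k_0$-th summand identifies $s=-2-1/k_0$ as a simple pole with residue $2^{-k_0}/k_0$, and $s=-2$ is a non-isolated singularity as the limit of these poles.

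Step 3 (remaining assertions). Direct substitution gives
\[
\zeta_{\ty,\O}(-2;|\cdot|_\ty)=\sum_{k=1}^{\ty}\frac{2^{-k}}{k\cdot(1/k)}=\sum_{k=1}^{\ty}2^{-k}=1=|\O|.
\]
The identity $D(\zeta_{\ty,\O}(\,\cdot\,;|\cdot|_\ty))=-2=\dim_B(\ty,\O)$ follows from Theorem~\ref{analiticinf}(a) (which is norm-independent by the remark following it) together with Proposition~\ref{prim_D=-2}; it is also visible from the series itself, whose poles $-2-1/k$ approach $-2$ from the left, precluding any larger half-plane of absolute convergence. The only delicate point is the passage from a formal sum of rational functions to a genuine meromorphic function past the accumulation point: because the poles cluster at $s=-2$, a uniform bound on compact sets is possible only when $-2$ is explicitly excluded, and this is exactly what the $\d$-argument above exploits.
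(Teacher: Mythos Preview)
Your argument is correct and follows essentially the same route as the paper's proof: the same reduction via $|(x,y)|_\ty=x$ on $_1\O$, the same iterated integration over each $\widetilde{\O}_k$, and the same Weierstrass $M$-test to obtain holomorphy away from the exceptional set. Your treatment is in fact slightly more explicit in two places---the $\d$-distance bound for uniform convergence on compacta and the isolation of a single summand to read off the residue---but the underlying ideas coincide with the paper's.
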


\begin{proof}
Let us calculate the distance zeta function at infinity using the $|\cdot|_{\ty}$ norm on $\eR^N$.
For $T=1>\log 2$ we have that $|(x,y)|_{\ty}=x$ for $(x,y)\in{_1\O}$ and consequently
$$
\begin{aligned}
\zeta_{\ty,\O}(s;1;|\cdot|_{\ty})&=\int_{\O}|(x,y)|_{\ty}^{-s-2}\di x\mathrm{d}y=\sum_{k=1}^{\ty}\int_{\O_k}|(x,y)|_{\ty}^{-s-2}\di x\mathrm{d}y\\
&=\sum_{k=1}^{\ty}\int_{\O_k}x^{-s-2}\di x\mathrm{d}y=\sum_{k=1}^{\ty}\int_{\widetilde{\O}_k}x^{-s-2}\di x\mathrm{d}y\\
&=\sum_{k=1}^{\ty}\int_{1}^{+\ty}\di x\int_{0}^{\frac{2^{-k}}{k}x^{-\a_k}}x^{-s-2}\di y=\sum_{k=1}^{\ty}\frac{2^{-k}}{k}\int_{1}^{+\ty}x^{-s-3-\frac{1}{k}}\di x\\
&=\sum_{k=1}^{\ty}\frac{2^{-k}}{k(s+2+\frac{1}{k})}.
\end{aligned}
$$
The last equation above is valid if and only if $\re s>-2-1/k$ for every $k\geq 1$.
Furthermore, by using the Weierstrass $M$-test we have that the last sum appearing above defines a holomorphic function on $\Ce\setminus(\{-2\}\cup\{-2-1/k\,:\,k\geq 1\})$, which implies that $D(\zeta_{\ty,\O}(\,\cdot\,;|\cdot|_{\ty}))=-2$.
On the other hand, by direct computation we have that $\zeta_{\ty,\O}(-2;|\cdot|_{\ty})=|\O|=1$, but the zeta function cannot be even meromorphically extended to a neighborhood of $s=-2$.
This follows from the fact that for $\re s>-2$ we have that
$
\zeta_{\ty,\O}(s;|\cdot|_{\ty})=\sum_{k=1}^{\ty}\frac{2^{-k}}{k}z_k(s),
$
where the functions $z_k$ are meromorphic on $\Ce$ with simple poles at $s_k=-2-1/k$.
Furthermore, the above sum converges uniformly on compact subsets of $\Ce\setminus\{s_k\,:\,k\geq 1\}$, i.e., it defines a holomorphic function on that set, but it has an accumulation of simple poles at $s=-2$, and by the principle of analytic continuation, the same is true for $\zeta_{\ty,\O}(\,\cdot\,;|\cdot|_{\ty})$.
In other words, $D(\zeta_{\ty,\O}(\,\cdot\,;|\cdot|_{\ty}))=-2$ and this, in turn, is equal to $\dim_B(\ty,\O)$ according to~\eqref{-2_dim}.
\end{proof}

\begin{remark}
Although Proposition~\ref{primD=-2_zeta} is stated in terms of the distance zeta function calculated via the $|\cdot|_{\ty}$-norm, Proposition~\ref{euc_ty} below will guarantee that the difference $\zeta_{\ty,\O}(\,\cdot\,;|\,\cdot\,|_{\ty})-\zeta_{\ty,\O}$ is holomorphic at least on the half-plane $\{\re s>-4\}$.
%From this we conclude that this difference is actually an entire function since $\zeta_{\ty}(\,\cdot\,,\O;|\cdot|_{\ty})$ is holomorphic on $\{\re s<-2\}$.
From this we conclude that~\eqref{eq1} is also true for $\zeta_{\ty,\O}$, $\zeta_{\ty,\O}(-2)=1$, and $\zeta_{\ty,\O}$ is holomorphic (at least) on the set
$
\{\re s>-4\}\setminus(\{-2\}\cup\{-2-1/k\,:\,k\geq 1\})
$
with $s=-2$ being an accumulation point of its simple poles.
\end{remark}

\section{Residues of the Lapidus Zeta Function at Infinity}\label{ty_res_sec}

In this section we will derive results which relate the the upper and lower Minkowski content of $(\ty,\O)$ with the residue of the distance zeta function at infinity at $s=\dim_B(\ty,\O)$.

\begin{theorem}\label{resdistinf}
Let $\O\subseteq\eR^N$ be such that $|\O|<\ty$ and $\dim_B(\ty,\O)=D<-N$, $0<{\unb{\M}}^{D}(\ty,\O)\leq{\ovb{\M}}^{D}(\ty,\O)<\ty$.
If $\zeta_{\ty,\O}$ has a meromorphic continuation to a neighborhood of $s=D$, then $D$ is a simple pole and it holds that
\begin{equation}\label{mink_res_inf}
-(N+D){\unb{\M}}^{D}(\ty,\O)\leq\res(\zeta_{\ty,\O},D)\leq -(N+D){\ovb{\M}}^{D}(\ty,\O).
\end{equation}
Moreover, if $\O$ is Minkowski measurable at infinity, then we have
\begin{equation}
\res(\zeta_{\ty,\O},D)=-(N+D)\M^{D}(\ty,\O).
\end{equation}
\end{theorem}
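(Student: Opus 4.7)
The plan is to use the integral representation from Proposition~\ref{cor_infint}, which gives, for real $s>D$,
\[
\zeta_{\ty,\O}(s) = T^{-s-N}|_T\O| - (s+N)\int_T^{+\ty} t^{-s-N-1}|_t\O|\di t.
\]
The first term is entire in $s$, so the singular behavior of $\zeta_{\ty,\O}$ at $s=D$ is governed entirely by the integral. Since $D<-N$, the factor $-(s+N)$ is strictly positive for $s$ close to $D$, which will be crucial when translating bounds on $|_t\O|$ into bounds on the residue.

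The main estimate comes from the nondegeneracy hypothesis. For every $\e>0$, I choose $T$ large enough so that
\[
(\unb{\M}^{D}(\ty,\O)-\e)\,t^{N+D}\le|_t\O|\le(\ovb{\M}^{D}(\ty,\O)+\e)\,t^{N+D}
\]
for all $t\ge T$. Substituting these bounds into the integral and computing the elementary integral $\int_T^{+\ty}t^{D-s-1}\di t = T^{D-s}/(s-D)$ for real $s>D$, then multiplying through by $(s-D)$ and noting that $(s-D)T^{-s-N}|_T\O|\to 0$ as $s\to D^+$, I obtain
\[
-(D+N)(\unb{\M}^{D}(\ty,\O)-\e)\le\liminf_{s\to D^+}(s-D)\zeta_{\ty,\O}(s)\le\limsup_{s\to D^+}(s-D)\zeta_{\ty,\O}(s)\le-(D+N)(\ovb{\M}^{D}(\ty,\O)+\e).
\]
Letting $\e\to 0$ gives the same chain with $\e$ removed.

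To finish, I use the assumption of meromorphic continuation: $(s-D)\zeta_{\ty,\O}(s)$ then admits a Laurent expansion at $D$ with finite-order principal part. The two-sided boundedness just established forces the order of the pole at $D$ to be at most one, while Theorem~\ref{analiticinf}$(b)$ gives $\zeta_{\ty,\O}(s)\to+\ty$ as $s\to D^+$, so the pole is exactly simple. Hence $\lim_{s\to D^+}(s-D)\zeta_{\ty,\O}(s)=\res(\zeta_{\ty,\O},D)$, and \eqref{mink_res_inf} follows. In the Minkowski measurable case the inequalities $\unb{\M}^{D}(\ty,\O)=\ovb{\M}^{D}(\ty,\O)=\M^{D}(\ty,\O)$ collapse the bounds to the single equality $\res(\zeta_{\ty,\O},D)=-(N+D)\M^{D}(\ty,\O)$.

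The main obstacle I anticipate is purely a bookkeeping one: because $D<-N$, the factor $-(D+N)$ is positive, so the direction of the inequalities is preserved under multiplication, but this is easy to get backwards. A secondary point worth flagging is that the limit is taken along reals $s\to D^+$ in order to exploit positivity of the integrand; simplicity of the pole, established a posteriori, guarantees that this one-sided real limit coincides with the complex residue.
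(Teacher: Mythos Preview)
Your proof is correct and follows essentially the same route as the paper's own argument: both use the integral representation from Proposition~\ref{cor_infint}, bound $|_t\O|$ by multiples of $t^{N+D}$ coming from the Minkowski nondegeneracy hypothesis, evaluate the resulting elementary integral, and then pass to the limit $s\to D^+$ after multiplying by $(s-D)$. The only cosmetic difference is that the paper encodes the upper bound via $C_T:=\sup_{t\ge T}|_t\O|/t^{N+D}$ and lets $T\to\infty$ at the end, whereas you use an $\e$-formulation of the same $\limsup$/$\liminf$; these are equivalent, and your version is arguably a bit more transparent.
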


\begin{proof}
Firstly, using the fact that ${\unb{\M}}^{D}(\ty,\O)>0$ we can apply part $(c)$ of Theorem~\ref{analiticinf} to get that $\zeta_{\ty,\O}(s)\to+\ty$ as $\eR\ni s\to D^+$.
In fact, by looking at the proof of part $(c)$ of Theorem~\ref{analiticinf} we can see that $s=D$ is a singularity of $\zeta_{\ty,\O}$ that is at least a simple pole.
It remains to show that the order of this pole is not greater than one.
Let us define
$C_T:=\sup_{t\geq T}\frac{|_t\O|}{t^{N+D}}$.
From ${\ovb{\M}}^{D}(\ty,\O)<+\ty$ we have that $C_T<+\ty$ for $T$ large enough.
Now, for $s\in\eR$ such that $D<s<-N$ by using Proposition~\ref{cor_infint} we have
\begin{equation}\label{res_compute}
\begin{aligned}
\zeta_{\ty,\O}(s)&=T^{-s-N}|_T\O|-(s+N)\int_T^{+\ty}t^{-s-N-1}|_t\O|\di t\\
&\leq T^{-s-N}C_TT^{N+D}-(s+N)\int_T^{+\ty}t^{-s-N-1}C_Tt^{N+D}\di t\\
&=C_TT^{D-s}-C_T(s+N)\int_{T}^{+\ty}t^{D-s-1}\di t\\
&=C_TT^{D-s}-C_T(s+N)\frac{T^{D-s}}{s-D}=-(N+D)C_T\frac{T^{D-s}}{s-D}.
\end{aligned}
\end{equation} 
This implies that $0\leq\zeta_{\ty,\O}(s)\leq C_1(s-D)^{-1}$ where $C_1>0$ is a constant independent of $s$ and $T$ and from this we conclude that $s=D$ is a pole of at most order one, i.e., it is a simple pole.
To compute the residue at $s=D$ we observe that its value is independent of $T$ because the difference $\zeta_{\ty,\O}(s;T_2)-\zeta_{\ty,\O}(s;T_1)$ is an entire function.
Furthermore, from~\eqref{res_compute} we have
$
(s-D)\zeta_{\ty,\O}(s)\leq -(N+D)C_TT^{D-s}
$ 
and taking limits on both sides as $s\to D^+$ yields
$
\res(\zeta_{\ty,\O},D)\leq -(N+D)C_T.
$
Finally, by taking the limit as $T\to+\ty$ we get $\res(\zeta_{\ty,\O},D)\leq -(N+D){\ovb{\M}}^{D}(\ty,\O)$.
The proof of the inequality involving the lower Minkowski content is completely analogous and this completes the proof.
\end{proof}

The next technical proposition is needed in order to establish a finer connection between the zeta function at infinity defined via the Euclidean norm and the one defined via the $|\cdot|_{\ty}$-norm.
It is very useful since the later zeta function can be calculated explicitly in the examples we are interested in.
The proof follows from a more general theorem (see \cite[Theorem 4.55]{ra}) which is proved by using the complex mean value theorem \cite[Theorem~2.2]{EvJa} and the theorem about complex differentiation under the integral sign (see, e.g., \cite{carlson,Mattn}).
Due to the technical nature we omit the proof here and refer the reader to \cite[Theorem 4.55 and Proposition 4.58]{ra} for the detailed proof.

\begin{proposition}\label{euc_ty}
Let $\O\subseteq\eR^N$ with $|\O|<\ty$ be such that it is contained in a cylinder
$
x_2^2+x_3^2+\cdots+x_N^2\leq C
$
for some constant $C>0$ where $x=(x_1,\ldots,x_N)$.
Furthermore, let $\ovb{D}:=\ovb{\dim}_B(\ty,\O)$ and $T>0$.
Then
\begin{equation}
\zeta_{\ty,\O}(s;T)-\int_{{B_T(0)^c\cap\O}}|x|_{\ty}^{-s-N}\di x
\end{equation}
is holomorphic on $($at least$)$ the half-plane $\{\re s>\ovb{D}-2\}$.

Furthermore, if any of the two distance zeta functions possesses a meromorphic extension to some open connected neighborhood $U$ of the critical line $\{\re s=\ovb{D}\}$, then the other one possesses a meromorphic extension to $($at least$)$ $V:=U\cap\{\re s>\ovb{D}-2\}$.
Moreover, their multisets of poles in $U\cap\{\re s>\ovb{D}-2\}$ coincide.
\end{proposition}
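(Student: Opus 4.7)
The plan is to show that the pointwise difference $|x|^{-s-N}-|x|_{\ty}^{-s-N}$ decays two orders faster in $|x|$ than $|x|^{-s-N}$ itself, which buys the required two units of extra holomorphicity. First, since $|\O|<\ty$ and $\O$ is confined to a cylinder of finite transverse radius $\sqrt{C}$, one may replace $T$ by any larger $T'$ without affecting the proposition: the difference between the integrals over $B_T(0)^c\cap\O$ and $B_{T'}(0)^c\cap\O$ (for either norm) is an integral over a bounded region bounded away from the origin, hence entire by Theorem~\ref{an2}$(c)$. So I would pick $T$ large enough that every $x=(x_1,\ldots,x_N)\in{_T\O}$ satisfies $|x_1|\geq\sqrt{C}$, which forces $|x|_{\ty}=|x_1|$ and
$$0\leq|x|-|x|_{\ty}=\frac{x_2^2+\cdots+x_N^2}{|x|+|x_1|}\leq\frac{C}{2|x_1|}.$$

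Next I would apply the Evard--Jafari complex mean value theorem \cite[Theorem~2.2]{EvJa} to the holomorphic function $z\mapsto z^{-s-N}$ on the real segment $[|x|_{\ty},|x|]\subset(0,\ty)$. Combined with the transverse estimate above and the comparison $|x_1|\geq|x|/\sqrt{2}$ valid on ${_T\O}$, this produces a bound of the form
$$\bigl||x|^{-s-N}-|x|_{\ty}^{-s-N}\bigr|\leq K(\re s)\,|s+N|\,|x|^{-\re s-N-2},$$
where $K(\cdot)$ is locally bounded on $\eR$.

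Then I would integrate over ${_T\O}$: the modulus of the difference of the two distance zeta functions is dominated by a constant times $|s+N|\int_{B_T(0)^c\cap\O}|x|^{-(\re s-2)-N}\di x$, which by Theorem~\ref{analiticinf}$(a)$ is finite precisely on $\{\re s-2>\ovb{D}\}$. Since the pointwise bound is locally uniform in $s$ on compact subsets of $\{\re s>\ovb{D}-2\}$, the standard theorem on holomorphic dependence of parameter integrals (the one invoked in the proof of Theorem~\ref{an2}) yields the holomorphicity of the difference on that half-plane. The assertion concerning meromorphic continuations then follows at once: each distance zeta function equals the other plus a function that is holomorphic on $\{\re s>\ovb{D}-2\}$, so a meromorphic extension of one to $U$ yields a meromorphic extension of the other to $V=U\cap\{\re s>\ovb{D}-2\}$, and the multisets of poles in $V$ coincide.

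The hardest step is the second one, where one must be scrupulous in justifying the complex MVT estimate (which holds as an inequality with constant $\sqrt{2}$ rather than an equality) and in verifying the local uniformity needed to commute the limit with the integral. The rest is, in essence, the same argument that underlies Theorem~\ref{analiticinf}, simply with the exponent shifted by $-2$.
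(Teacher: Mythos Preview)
Your approach is exactly the one the paper indicates (the paper omits the proof but points to the Evard--Jafari complex mean value theorem together with differentiation under the integral sign, referring to \cite[Theorem~4.55 and Proposition~4.58]{ra}): bound $|x|-|x|_{\ty}$ on the cylinder, apply the complex MVT to $z\mapsto z^{-s-N}$ to gain two orders of decay, and then invoke the holomorphicity criterion behind Theorem~\ref{an2}. One arithmetic slip to fix: the dominating integral should be $\int_{_T\O}|x|^{-(\re s+2)-N}\di x$, which is finite for $\re s+2>\ovb{D}$, i.e., on $\{\re s>\ovb{D}-2\}$; you wrote $\re s-2$ in both the exponent and the finiteness condition, which as written would give $\{\re s>\ovb{D}+2\}$ rather than the half-plane you (correctly) claim in the next sentence.
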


We now introduce the notion of complex dimensions of $(\ty,\O)$ analogously as in the case of ordinary relative fractal drums.

\begin{definition}\label{pcw}
Let $\O\subseteq\eR^N$ be of finite $N$-dimensional Lebesgue measure and such that its Lapidus zeta function at infinity can be meromorphically extended to some open connected neighborhood $W$ of the half-plane $\{\re s\geq\ovb{\dim}_B(\ty,\O)\}$.
We define the {\em set of visible complex dimensions of $(\ty,\O)$ through $W$} as the set of poles of the distance zeta function $\zeta_{\ty,\O}$ that are contained in $W$ and denote it by
\begin{equation}\label{po_vis}
\po(\zeta_{\ty,\O},W):=\{\omega\in W:\mbox{$\omega$ is a pole of }\zeta_{\ty,\O}\}
\end{equation}
which we will abbreviate to $\po(\zeta_{\ty,\O})$ when there is no ambiguity concerning the choice of $W$ (or when $W=\Ce$).

Furthermore, if $\zeta_{\ty,\O}$ possesses a meromorphic continuation to the whole of $\Ce$, we will call the set $\po(\zeta_{\ty,\O},\Ce)$ the {\em set of $($all$)$ complex dimensions of $(\ty,\O)$}.
The subset of $\po(\zeta_{\ty,\O},W)$ consisting of poles with real part equal to $\ovb{\dim}_B(\ty,\O)$ is called the set of {\em principal complex dimensions of $(\ty,\O)$} and is denoted by $\dim_{PC}(\ty,\O)$.
\end{definition}

%
%\begin{proof}
%
%We observe that for $T>0$ sufficiently large we have
%$$
%|x|-|x|_{\ty}=|x|-|x_1|=\frac{\sum_{i=2}^Nx_i^2}{|x|+|x_1|}\leq C|x|^{-1},\quad x\in{B_T(0)^c\cap\O}.
%$$
%In other words $|x|\underset{(\ty,\O)}{\overset{-1}{\sim}}\|x\|$ and the conclusion now follows by applying Theorem~\ref{equiv_mero}.
%\end{proof}

\section{Cantor-like Sets at Infinity}\label{qp_sets}

In this section we will construct a subset of $\eR^2$ with prescribed box dimension $D\in(-\ty,-2)$ at infinity that will have a Cantor-like structure in a sense that will be described below.
This set depends on two parameters and is denoted by $\O_{\ty}^{(a,b)}$ in Definition~\ref{Omega(a,b,ty)}.
Furthermore, these sets can be used as building blocks for the construction of (algebraically and transcendentally) quasiperiodic sets at infinity by using some classical results from transcendental number theory (see \cite{ra}).

\begin{figure}[h]
\begin{center}
%\psfrag{N+1}{$N+1$}
\includegraphics[width=12cm, height=3.5cm]{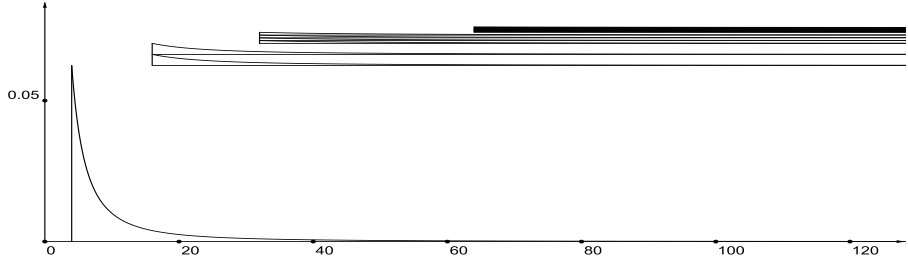}
\end{center}
\caption{An example of the Cantor-like two parameter set $\O_{\ty}^{(a,b)}$ from Definition \ref{Omega(a,b,ty)}. Here, $a=1/4$ and $b=2$. Note that the axes are not in the same scale and only the first four steps in the construction of the set $\O_{\ty}^{(1/4,2)}$ are shown; that is, for $m=1,2,3,4$.}
\label{Cantor_u_besk}
\end{figure}
\begin{definition}\label{Omega(a,b,ty)}
For $a\in(0,1/2)$ and $b\in(1+\log_{1/a}2,+\ty)$ we define a {\em two parameter unbounded set} denoted by $\O_{\ty}^{(a,b)}$.
We start with the countable family of sets
$$
\O_{m}^{(a,b)}:=\{(x,y)\in\eR^2\,:\,x>a^{-m},\ 0<y<x^{-b}\},\quad m\geq 1.
$$
Now, we will construct the set $\O_{\ty}^{(a,b)}$ by ``stacking'' the translated images of the sets $\O_{m}^{(a,b)}$ along the $y$-axis on top of each other.
More precisely, for each $m\geq 1$ we take $2^{m-1}$ copies of $\O_{m}^{(a,b)}$ and arrange all of these sets by vertical translations so that they are pairwise disjoint and lie in the strip $\{0\leq y\leq S\}$.
Here, $S$ is the sum of widths of all of these sets, i.e., %\footnote{For $b>1+\log_{1/a}2>\log_{1/a}2$ we have that $2a^{b}<1$ and the sum is convergent.}
$
S=\sum_{m=1}^{\ty}2^{m-1}\cdot(a^{-m})^{-b}=\frac{a^{b}}{1-2a^{b}}.
$
Moreover, without loss of generality, we can arrange them in an ``increasing fashion'', i.e., stacking them from bottom to top as $m$ increases $($see Figure~\ref{Cantor_u_besk}$)$.
Finally, we define $\O_{\ty}^{(a,b)}$ as the disjoint union of all of these sets.
\end{definition}

\begin{remark}\label{fin_vol}
The condition $b>1+\log_{1/a}2$ ensures that $\O_{\ty}^{(a,b)}$ has finite Lebesgue measure:
\begin{equation}\label{konacna}\nonumber
\begin{aligned}
|\O_{\ty}^{(a,b)}|&=\sum_{m=1}^{\ty}2^{m-1}|\O_{m}^{(a,b)}|=\sum_{m=1}^{\ty}2^{m-1}\int_{a^{-m}}^{+\ty}x^{-b}\di x=\frac{1}{b-1}\sum_{m=1}^{\ty}2^{m-1}(a^{-m})^{1-b}\\
&=\frac{1}{2(b-1)}\sum_{m=1}^{\ty}(2a^{b-1})^m=\frac{a^{b-1}}{(b-1)(1-2a^{b-1})}.
\end{aligned}
\end{equation}
and the last sum above is convergent for $b>1+\log_{1/a}2$ since then $2a^{b-1}<1$.
\end{remark}

\begin{proposition}\label{twop}
The distance zeta function of the two parameter unbounded set $\O_{\ty}^{(a,b)}$ calculated via the $|\cdot|_{\ty}$-norm on $\eR^2$ is given by
\begin{equation}\label{zeta_ab}
\zeta_{\ty,\O_{\ty}^{(a,b)}}(s;|\cdot|_{\ty})=\frac{1}{s+b+1}\cdot\frac{1}{a^{-(s+b+1)}-2}
\end{equation}
and is meromorphic on $\Ce$. Furthermore, the set of complex dimensions of $\O_{\ty}^{(a,b)}$ at infinity visible through $W:=\{\re s>\log_{1/a}-b-3\}$ is given by
\begin{equation}\label{twop_set}
\{-(b+1)\}\cup\left(\log_{1/a}2-(b+1)+\frac{2\pi}{\log (1/a)}\I\Ze\right).
\end{equation}
Finally, we also have that
$
{{\ovb\dim}_{B}}(\ty,\O_{\ty}^{(a,b)})=\log_{1/a}2-(b+1).
$
\end{proposition}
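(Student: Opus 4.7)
The strategy is to exploit the explicit stacked structure of $\O_{\ty}^{(a,b)}$ to compute the $|\cdot|_{\ty}$-distance zeta function in closed form, read off its poles, and then transfer the information to the Euclidean zeta function via Proposition~\ref{euc_ty}.

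First I would verify that $\O_{\ty}^{(a,b)}$ is contained in the horizontal strip $\{0\le y\le S\}$ with $S<a^{-1}$; indeed, under the standing hypothesis $b>1+\log_{1/a}2$ one has $a^{b}<a/2<1/3$, hence $S=a^{b}/(1-2a^{b})<1<a^{-1}$. Since every point of the set satisfies $x\ge a^{-1}>S\ge y$, one has $|(x,y)|_{\ty}=x$ on $\O_{\ty}^{(a,b)}$ and the $y$-integration is trivial. Taking $T=1$, exploiting disjointness of the $2^{m-1}$ translated copies of $\O_{m}^{(a,b)}$, and invoking Tonelli's theorem, one obtains
\begin{equation*}
\zeta_{\ty,\O_{\ty}^{(a,b)}}(s;|\cdot|_{\ty})
=\sum_{m=1}^{\ty}2^{m-1}\int_{a^{-m}}^{+\ty}\!\int_{0}^{x^{-b}}x^{-s-2}\,dy\,dx
=\frac{1}{2(s+b+1)}\sum_{m=1}^{\ty}\bigl(2a^{s+b+1}\bigr)^{m},
\end{equation*}
initially on the half-plane $\{\re s>\log_{1/a}2-b-1\}$ where the geometric series converges. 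Summing yields the closed form~\eqref{zeta_ab}, which provides a meromorphic continuation to all of $\Ce$.

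Next, I would read the poles off directly. The factor $(s+b+1)^{-1}$ contributes a simple pole at $s=-(b+1)$, while the equation $a^{-(s+b+1)}=2$ is equivalent to $(s+b+1)\log(1/a)\in\log 2+2\pi\I\Ze$, giving the arithmetic progression $s=\log_{1/a}2-(b+1)+\tfrac{2\pi}{\log(1/a)}\I k$, $k\in\Ze$. Each such pole is simple since the derivative of the denominator at those points equals $\log(1/a)\cdot 2\ne 0$; moreover the two families are disjoint (their real parts differ by $\log_{1/a}2\ne 0$). This identifies the full set of complex dimensions of the $|\cdot|_{\ty}$-zeta function.

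Finally, I would transfer this to the Euclidean setting and extract the box dimension. The strip containment places $\O_{\ty}^{(a,b)}$ in a cylinder of the type required by Proposition~\ref{euc_ty}, whose conclusion is that the $|\cdot|_{\ty}$-zeta function and the Euclidean zeta function share their poles in $\{\re s>\ovb{D}-2\}$, which is precisely $W$ once we know $\ovb{D}=\log_{1/a}2-(b+1)$. For the box dimension itself, Lemma~\ref{infty_norm} (applicable since $\O_{\ty}^{(a,b)}$ lies in a horizontal strip) gives norm-independence of the upper Minkowski content, and Theorem~\ref{analiticinf}(a) then identifies $\ovb{\dim}_B(\ty,\O_{\ty}^{(a,b)})$ with the abscissa of convergence of the Euclidean zeta function; equivalently (by the same Lemma~\ref{infty_norm}, applied to the associated integral) it equals the abscissa of convergence of the $|\cdot|_{\ty}$ version, which is $\log_{1/a}2-(b+1)$ by the radius of the geometric series above. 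The main delicate point is this last bookkeeping between the two norms; the zeta computation itself and the subsequent pole-counting are mechanical.
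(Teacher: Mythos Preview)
Your proposal is correct and follows essentially the same route as the paper: compute the $|\cdot|_{\ty}$-zeta function by summing the contributions of the $2^{m-1}$ copies of $\O_m^{(a,b)}$ as a geometric series, read off the abscissa of convergence (hence $\ovb{\dim}_B$), and then invoke Proposition~\ref{euc_ty} to transfer the pole structure to the Euclidean zeta function on $W=\{\re s>\ovb{D}-2\}$. Your extra care in verifying $|(x,y)|_{\ty}=x$ via the bound $S<1<a^{-1}$, and in explicitly locating and separating the two families of simple poles, fills in details the paper leaves implicit; for the norm-transfer of the box dimension the paper simply appeals (via an earlier remark) to the fact that Theorem~\ref{analiticinf} holds verbatim for any norm, which is a slightly more direct way to phrase what you do with Lemma~\ref{infty_norm}.
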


\begin{proof}
Let us choose $T=1$ and calculate:
$$
\begin{aligned}
\zeta_{\ty,\O_{\ty}^{(a,b)}}(s;1;|\cdot|_{\ty})&=\int_{\O_{\ty}^{(a,b)}}|(x,y)|_{\ty}^{-s-2}\di x\mathrm{d}y=\sum_{m=1}^{\ty}2^{m-1}\int_{\O_{m}^{(a,b)}}x^{-s-2}\di x\mathrm{d}y\\
&=\sum_{m=1}^{\ty}2^{m-1}\int_{a^{-m}}^{+\ty}\di x\int_0^{x^{-b}}x^{-s-2}\di y\\
&=\sum_{m=1}^{\ty}2^{m-1}\int_{a^{-m}}^{+\ty}x^{-s-2-b}\di x=\frac{1}{2(s+b+1)}\sum_{m=1}^{\ty}(2a^{s+b+1})^m\\
%[\re s>-(b+1)]&=\frac{1}{2(s+b+1)}\sum_{m=1}^{\ty}(2a^{s+b+1})^m\\
&=\frac{1}{s+b+1}\cdot\frac{1}{a^{-(s+b+1)}-2},
\end{aligned}
$$
where the last two equalities follow since $\re s>\log_{1/a}2-(b+1)$.
From this we see that $D(\zeta_{\ty,\O_{\ty}^{(a,b)}}(\,\cdot\,;|\cdot|_{\ty}))=\log_{1/a}2-(b+1)$ and the zeta function has a (unique) meromorphic extension to all of $\Ce$ defined by \eqref{zeta_ab}.
Furthermore, we have that
$
{\ovb{\dim}}_B(\ty,\O_{\ty}^{(a,b)})=\log_{1/a}2-(b+1).
$
Since $({\ty},\O_{\ty}^{(a,b)})$ is contained in a strip of finite width, we can apply Proposition~\ref{euc_ty} to conclude that the difference $\zeta_{\ty,\O_{\ty}^{(a,b)}}(\,\cdot\,;|\cdot|_{\ty})-\zeta_{\ty,\O_{\ty}^{(a,b)}}$ is holomorphic on the half-plane $\{\re s>\log_{1/a}2-(b+1)-2\}=\{\re s>\log_{1/a}2-b-3\}$ from which we conclude that the complex dimensions of $({\ty},\O_{\ty}^{(a,b)})$ visible through $W$ are given by \eqref{twop_set}.
\end{proof}
%\vspace{-0.4cm}
The two parameter set $\O_{\ty}^{(a,b)}$ is Cantor-like in the sense that its construction parallels, in a way, the construction of the (generalized) Cantor set.
For instance, if we choose $a=3^{-1}$ then the construction of the sets $\O_m^{(1/3,b)}$ for $m\geq 1$ parallels the deletion of the middle-third interval in the standard middle-third Cantor set.
This Cantor-like structure can also be seen in the structure of the complex dimensions of the two sets.
Namely, the set of principal complex dimensions of the middle-third Cantor set is given by $\log_32+\frac{2\pi}{\log 3}\I\Ze$ while the set of principal complex dimensions of $\O_{\ty}^{(1/3,b)}$ is equal to $\log_32-(b+1)+\frac{2\pi}{\log 3}\I\Ze$.
As we can see, the {\em oscillatory period} $\mathbf{p}:=\frac{2\pi}{\log 3}\I\Ze$ of these two sets coincides.
In the definition of fractality proposed in \cite{fzf}, we have defined a set or a relative fractal drum to be fractal if it possesses a nonreal complex dimension.
The motivation for this definition is justified, under mild hypotheses, in the case of relative fractal drums since it is shown in \cite{fzf} that nonreal complex dimensions generate oscillations in the inner geometry of the relative fractal drum.
We expect that analogous results can also be derived in the case of fractal sets at infinity.

It is not difficult to compute the box dimension of $\O_{\ty}^{(a,b)}$ at infinity directly and obtain that ${\dim}_B(\ty,\O_{\ty}^{(a,b)})=\log_{1/a}2-(b+1)$.
Furthermore, one also obtains that $\O_{\ty}^{(a,b)}$ is not Minkowski measurable at infinity which is expected due to the presence of nonreal complex dimensions.
%For the calculation one uses the $|\cdot|_\ty$ norm to measure the neighborhoods of infinity and since $\O_{\ty}^{(a,b)}$ is contained in a horizontal strip of finite width, according to Lemma~\ref{infty_norm}, this will not affect the value of the Minkowski content of $\O_{\ty}^{(a,b)}$ at infinity 
For the detailed calculation see \cite[Example 4.63]{ra}.

\end{document}